\def\xv{{\mathbf X}}
\def\xlv{{\underline{\mathbf X}}}
\def\xlt{{\tilde{\underline{\mathbf X}}}}
\def\yv{{\mathbf Y}}
\def\zv{{\mathbf Z}}
\def\hv{{\mathbf h}}
\def\Xv{{\mathbf X}}
\def\Yv{{\mathbf Y}}
\def\Wv{{\mathbf W}}
\def\Pc{{\mathcal{P}}}
\def\Psiv{{\mathbf \Psi}}
\def\vv{{\mathbf v}}
\def \Pr {{\mathrm{Pr}}}
\def\Cc{{\mathcal{C}}}
\def\Mc{{\mathcal{M}}}
\def\E{{\mathbb{E}}}
\def\Dd{{\mathcal D}}
\def\Du{{\mathcal U}}
\def \Wh {{\hat{W}}}
\def \Wt {{\check{W}}}
\def \rh {{\hat{r}}}
\def \rt {{\breve{r}}}
\def\Whv {{\hat{\mathbf W}}}
\def\Wtv {{\breve{\mathbf W}}}
\def\Bc{{\mathcal B}}
\def\Lc{{\mathcal L}}
\def\Tc{{\mathcal T}}
\def\Ee{{\mathcal {E}}}
\theoremstyle{definition}
\newtheorem{theorem}{Theorem}
\newtheorem{definition}{Definition}
\newtheorem{lemma}{Lemma}
\newtheorem{corollary}{Corollary}
\newtheorem*{lemma-A1}{Lemma A.1}
\begin{document}

\title{A Broadcast Approach To Secret Key Generation Over Slow Fading Channels}

\author{Xiaojun~Tang,~Ruoheng~Liu,~Predrag~Spasojevi\'{c}~and~H.~Vincent~Poor
\thanks{This research was supported by the National Science Foundation
under Grants CNS-09-05398, CCF-07-28208 and CCF-0729142, and in part by the Air Force Office of Scientific Research under Grant  FA9550-08-1-0480. The material in this paper was presented in part at the IEEE International Symposium on Information Theory (ISIT), Seoul, Korea, June 24 - 29, 2009.}%
\thanks{Xiaojun Tang is with AT\&T Labs, San Ramon, CA 94583 USA (e-mail:
xiaojun.tang@att.com). Ruoheng Liu is with Alcatel-Lucent, Murray Hill, NJ 07974 USA.  (email:
ruoheng.liu@alcatel-lucent.com). Predrag Spasojevi\'{c} is with the Wireless Information Network Laboratory (WINLAB), Department of Electrical and Computer Engineering, Rutgers University,
North Brunswick, NJ 08902, USA (e-mail: spasojev@winlab.rutgers.edu). H. Vincent Poor is with the Department of Electrical Engineering, Princeton University, Princeton, NJ 08544, USA (email: poor@princeton.edu).}}



\maketitle

\begin{abstract}
A secret-key generation scheme based on a layered broadcasting strategy is introduced for slow-fading channels. In the model considered, Alice wants to share a key with Bob while keeping the key secret from Eve, who is a passive eavesdropper. Both Alice-Bob and Alice-Eve channels are assumed to undergo slow fading, and perfect channel state information (CSI) is assumed to be known only at the receivers during the transmission.
In each fading slot, Alice broadcasts a continuum of coded layers and, hence,  allows Bob to decode at the rate  corresponding to the fading state (unknown to Alice). The index of a reliably decoded layer is sent back from Bob to Alice via a public and error-free channel and used to generate a common secret key.
In this paper, the achievable secrecy key rate is first derived for a given power distribution over coded layers. The optimal power distribution is then characterized. It is shown that layered broadcast coding can increase the secrecy key rate significantly compared to single-level coding.
\end{abstract}

\begin{keywords}
Secret-key agreement, wiretap channel, layered broadcast coding, superposition coding, feedback, interference, fading channel
\end{keywords}

\section{Introduction}\label{sec:intro}

Wireless secrecy has attracted considerable research interest due to the concern that wireless communication is highly vulnerable to security attacks, particularly eavesdropping attacks. Much recent research was motivated by Wyner's wire-tap channel model \cite{Wyner:BSTJ:75}, in which the transmission between two legitimate users (Alice and Bob) is eavesdropped upon by Eve via a degraded channel. In this model, to characterize the leakage of information to the eavesdropper, equivocation rate is used to denote the level of ignorance of the eavesdropper with respect to the confidential messages. Perfect secrecy requires that the equivocation rate is asymptotically equal to the message rate, and the maximal achievable rate with perfect secrecy is called the secrecy capacity. Wyner showed that secret communication is possible without a secret-key shared by legitimate users. Later, Csisz{\'{a}}r and K{\"{o}}rner generalized Wyner's model to consider general broadcast channels in \cite{Csiszar:IT:78}. The Gaussian wire-tap channel was considered in \cite{Leung-Yan-Cheong:IT:78}.  Recent research has addressed  the information-theoretic secrecy for multi-user channel models \cite{Liang:IT:06,Liu:ISIT:06,Lai:IT:06,Tekin:IT:07,Liu:IT:08,Liang:IT:09}.
We refer the reader to \cite{Liang:ITS:08} for a recent survey of the research progress in this area.

Interestingly, the wireless medium provides its own endowments that facilitate defending against eavesdropping. One such endowment is fading \cite{bigl:IT:98}.
The effect of fading on secret transmission has been studied in \cite{Gopala:IT:08, Liang:IT:08J, Li:Allerton:06}.
In these works, assuming that all communicating parties have perfect channel state information (CSI), the ergodic secrecy capacity has been derived. The scenario in which Alice has no CSI about Eve's channel (but knows the channel statistics) has also been studied in \cite{Gopala:IT:08}.  The throughput of several secure hybrid automatic repeat request (ARQ) protocols has been analyzed in \cite{Tang:IT:09}.  In this work, Alice is not assumed to have prior CSI (except channel statistics), but can receive a 1-bit ARQ feedback per channel coherence interval from Bob reliably.

Arguably, the most useful application of (keyless) secret message transmission is secret-key generation. For instance,  a key can be sent from Alice to Bob as a secret message (which is selected by Alice in advance). More generally, as considered here, the key can be established after a communication session completes. This relaxation in the protocol can lead to a higher key rate. The secret-key generation problem in \cite{Maurer:IT:93} and \cite{Ahlswede:IT:93PI} assumes an interactive, authenticated public channel with unlimited capacity.  In \cite{Ahlswede:IT:93PI},  the   ``channel model with wiretapper" (CW) is similar to the wiretap channel model, while in the ``source model with wiretapper" (SW), Alice and Bob exploit correlated source observations to generate the key. Both SW and CW models have
been subsequently extended to multiple terminals \cite{Csiszar:IT:00,Csiszar:IT:04,Csiszar:IT:08} and to non-authenticated public channels \cite{Maurer:IT:03, Maurer:IT:03B, Maurer:IT:03C}. Secret-key generation using both correlated sources and channels has been considered more recently in \cite{Khisti:ISIT:08} and \cite{Prabhakaran:ISIT:08}.

In this paper, we consider a key-generation problem in which Alice wants to share a key with Bob while keeping it secret from Eve. The Alice-Bob and Alice-Eve channels (forward channels) undergo slow fading, and CSI is known only at the receivers. Furthermore, we assume a public and error-free feedback channel. The key generation scheme under consideration consists of a communication and a key-generation phase. In the communication phase, via the forward channel, Alice sends to Bob coded sequences, which are observed at Bob and Eve after independent distortions due to power attenuation and noise. Subsequently, Alice and Bob agree on the same secret-key in the key-generation phase. The problem setting resembles an SW model but differs in that the shared ``correlated sources" are coded sequences (from a public codebook and distorted by the channel).
We assume that the feedback channel from Bob to Alice is very limited. For each block transmission from Alice to Bob, Bob is required to send back one or more bits to Alice, where the one-bit feedback corresponds to an ARQ ACK/NACK scheme. An example application is where Alice sends a video clip to Bob, which is a non-secret transmission. Bob responds with a few bits and thus enables agreeing on a secret-key, which can then be used in key-based cryptographic protocols.

The communication phase is based on layered broadcast coding, which effectively adapts the decoded rate at Bob to the actual channel state without requiring CSI to be available at Alice. The transmission takes place over several time
slots. In each time slot, Alice transmits a continuum of layers. Depending on the realization of the channel state, Bob decodes a subset of layers reliably. The index of the highest reliably decoded layer at Bob is sent back to Alice, and used in the key-generation phase that follows Wyner's secrecy binning scheme \cite{Wyner:BSTJ:75}. For a given power distribution over coded layers, we derive the achievable secrecy key rate, which permits a simple interpretation as the average reward collected from all possible channel realizations. Furthermore, we characterize the optimal power distribution over coded layers to maximize the achievable secrecy key rate under the broadcast approach.

Layered broadcast coding creates {\it artificial noise} so that the undecodable layers at Bob play the role of self-interference. We show that, by properly choosing the coding rate for each layer, it is ensured that Eve cannot  benefit from the layered coding structure and is forced to treat the layers undecodable at Bob as interference. Secret communications with interference was studied in \cite{Tang:ISIT:08} and \cite{Tang:arXiv:09A} in a more general (but non-fading) setting. Layered broadcast coding for a slow-fading single-input single-output (SISO) channel model was originally introduced by Shamai in \cite{Shamai:ISIT:97} and discussed in greater details in \cite{Shamai:IT:03}. The results in this paper are consistent with \cite{Shamai:ISIT:97} and \cite{Shamai:IT:03} when the additional secrecy key generation requirement phase is not considered. In a closely related work, a similar ARQ-based secret-key generation scheme employing single-level coding was studied in \cite{Ghany:ICC:09}. This scheme can be viewed as a special case of the proposed layered-coding based scheme as all power is allocated to a single coded layer. We show that layered broadcast coding can increase the secrecy key rate significantly compared to single-level coding.

The remainder of the paper is organized as follows. Section \ref{layer:model} describes the system model. Section \ref{layer:layered} states the broadcast approach for key generation. Section \ref{layer:srate} gives the achievable secrecy rate for a given power distribution. Section \ref{layer:power} characterizes the optimal power distribution. A numerical example involving a Rayleigh fading channel is given in Section \ref{layer:Rayleigh}. Conclusions are given in Section \ref{layer:conclusions}.

\section{System Model}\label{layer:model}

As depicted in Fig.~\ref{fig:model}, we consider a three-terminal model, in which Alice and Bob want to share a secret key in the presence of Eve, who is a passive eavesdropper. That is, Eve is interested in stealing the key but does not attempt to interfere with the key generation processes.

\subsection{Channel Model}\label{layer:model:channel}

\begin{figure*}
  \centering
  \includegraphics[width=0.7\linewidth]{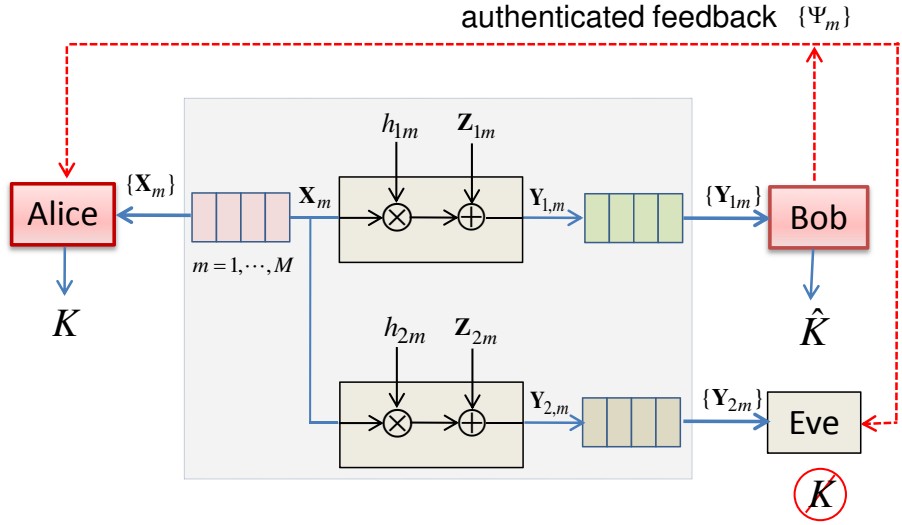}\\
  \caption{Alice and Bob want to agree on a key
  ($K=\hat{K}$), while keeping the key secret from Eve
  ($H(K|\Yv_2, \hv_2, \Psiv)/n \rightarrow 0$).}\label{fig:model}
\end{figure*}

The Alice-Bob and Alice-Eve channels (forward channels) undergo block fading, in which the channel gains are constant within a block while varying independently from block to block \cite{bigl:IT:98}. We assume that each block is associated with a time slot of duration $T$ and bandwidth $B$; that is, $N=\lfloor 2BT\rfloor$ real symbols can be sent in each slot. We also assume that the number of channel uses within each slot (i.e., $N$) is large enough to allow for invoking random coding arguments.

Let us assume that the transmissions in the forward channels take place over $M$ time slots. In a time slot indexed by $m \in [1,
\dots, M]$, Alice sends $\xv_m$, which is a vector of $N$ real symbols. Bob receives $\yv_{1m}$ through the channel gain $h_{1m}$ and Eve receives $\yv_{2m}$ through the channel gain $h_{2m}$. A
discrete time baseband-equivalent block-fading channel model can be expressed as
\begin{equation}\label{eq:chm}
  \yv_{tm} = \sqrt{h_{tm}} \xv_{m} + \zv_{tm}
\end{equation}
for $t=1,2$, where $\{\zv_{tm}\}$ are sequences of independent and identically distributed (i.i.d.) circularly symmetric complex Gaussian $\mathcal{N}(0, 1)$ random variables. We denote by $h_{1m}$ and $h_{2m}$ the states of the Alice-Bob and Alice-Eve channels, respectively, in time slot $m$. Without loss of generality, we drop the index $m$ and denote random channel realizations by $h_t$. We assume that $h_t$ is a real random variable with a probability density function (PDF) $f_t$ and a cumulative distribution function (CDF) $F_t$, for each $t=1,2$.
We also let $\hv_1=\left[h_{1,1},\dots, h_{1,M}\right]$ and $\hv_2=\left[h_{2,1},\dots, h_{2,M}\right]$ denote the power gain vectors for the Alice-Bob and Alice-Eve channels, respectively. We assume that Bob and Eve know their own channel gains perfectly; Alice does not know the CSI before its transmission, except for the channel statistics.

In addition, we assume a short term power constraint (excluding power variation across time slots) such that the average power of the signal $\xv_m$ per slot satisfies the constraint
\begin{equation}\label{layer-power}
    \frac{1}{N}E[\|\xv_m\|^2] \leq P
\end{equation}
for all $m=1,\dots,M$.

Finally, we assume that there exists an error-free feedback channel from Bob to Alice, through which Bob can feed back $\Psi_m$ for time slot $m$, where $\Psi_m$ is a deterministic function of $\Yv_{1m}$ and $h_{1,m}$. The feedback channel is assumed to be public, and therefore $\Psi_m$ is received by both Alice and Eve without any error.

\subsection{Secret Key Generation Protocol}\label{layer:model:protocol}

The secret key generation protocol consists of two phases: a communication phase and a key-generation phase.

\subsubsection{Communication Phase}\label{layer:model:protocol:comm}

We assume that the transmission during the communication phase takes place over $M$ time slots. That is, Alice sends a sequence of signals $\xv=(\xv_1, \xv_2, \dots, \xv_M)$ to the channel. Accordingly, Bob receives from his channel a sequence of signals denoted by $\yv_1=(\yv_{1,1}, \yv_{1,2}, \dots, \yv_{1,M})$ and Eve receives $\yv_2=(\yv_{2,1}, \yv_{2,2}, \dots, \yv_{2,M})$ from her channel. We let $n=MN$ denote the number of symbols sent by Alice in the communication phase.

After the transmission, Bob uses the feedback channel to send $\Psiv=(\Psi_1, \dots, \Psi_M)$, which is received by both Alice and Eve since the feedback channel is public and error-free.

\subsubsection{Key-Generation Phase}\label{layer:model:protocol:key-gen}

The communication phase is followed by a key-generation phase, in which both Alice and Bob generate the key based on the forward and backward signals. A general key-generation phase can be described as in the following.

Let $\mathcal{K}=\{1, 2, \dots, 2^{nR_s}\}$, where $R_s$ represents the secrecy key rate. Alice generates a secret key $k \in \mathcal{K}$ by using a decoding function $K$, i.e.,
\begin{equation}\label{alicekey}
    k=K\left(\xv,\Psiv\right).
\end{equation}
Bob generates the secret key $\hat{k} \in \mathcal{K}$ by using a decoding function $\hat{K}$, i.e.,
\begin{equation}\label{bobkey}
    \hat{k}=\hat{K}\left(\yv_1,\hv_1,\Psiv\right)= \hat{K}\left(\yv_1,\hv_1\right),
\end{equation}
where the second equality holds since we assume that $\Psiv$ is a deterministic function of $\yv_1$ and $\hv_1$.

The secrecy level at Eve is measured by the equivocation rate $R_e$ defined as the entropy rate of the key $K$ conditioned upon the observations at Eve, i.e.,
\begin{equation}\label{Eve-equivocation}
    R_e \triangleq \frac{1}{n}H(K|\Yv_2, \hv_2, \Psiv).
\end{equation}

\begin{definition}
A secrecy key rate $R_s$ is achievable if the conditions
\begin{align}
     &\Pr\left(K=\hat{K}\right) \geq 1-\epsilon, \label{reliable-key}\\
    \mbox{and} \qquad &R_e \geq R_s - \epsilon,\label{secret-key}
\end{align}
are satisfied for any $\epsilon>0$ as the number of channel uses $n \rightarrow \infty$.
\end{definition}

\section{A Layered Broadcast Approach To Key Generation}\label{layer:layered}

In this section, we introduce a broadcast approach for secret-key generation, in which Gaussian layered broadcast coding is used for the communication phase, and random secrecy binning is used for the key generation phase.

Before presenting the scheme, we briefly introduce Gaussian layered broadcast coding. Finite-level layered broadcast coding (superposition coding) was introduced by Cover in \cite{Cover:IT:72} for general broadcast channels. In \cite{Shamai:ISIT:97}, Shamai studied a Gaussian fading channel with no CSI at the transmitter and considered the limiting case when there is a continuum of coded layers. In this section, we first take a look at a fading wiretap channel with a finite number of fading states, for which finite level layered broadcast coding is applicable. The channel will be used to derive the result for the limiting case of continuous fading, which is the focus of this paper.

\subsection{Finite-Level Layered Broadcast Coding for $L$-State Fading Channel}\label{layer:layered:glc}

Let us first consider a type of channel called ``the $L$-state fading wiretap channel," in which there are $L$ different fading states possibly observed on the Alice-Bob or Alice-Eve channel.
\begin{figure*}
  \centering
  \includegraphics[width=0.8\linewidth]{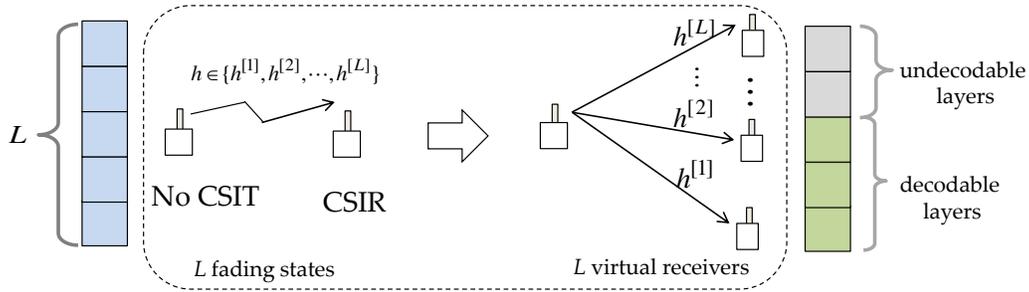}\\
  \caption{A point-to-point fading channel with $L$ possible fading states is viewed as a broadcast channel with $L$ virtual receivers each corresponding to a fading state.}\label{fig:layeredbroadcast}
\end{figure*}

\begin{definition}\label{def:L-state}
In an $L$-state fading wiretap channel, at any time slot, the realization of the  power gain of the Alice-Bob or Alice-Eve channel takes one value from $\{h^{[1]}, h^{[2]}, \dots, h^{[L]}\}$ independently and randomly, and is characterized by probability function $\Pr\{h_1=h^{[l_1]},h_2=h^{[l_2]}\}$. Without loss of generality, we assume that $\{h^{[l]}\}$ are ordered in ascending order.
\end{definition}

Here, let us focus on the Alice-Bob channel. As shown in Fig. \ref{fig:layeredbroadcast}, in a layered broadcast coding scheme, the point-to-point fading channel is viewed as a broadcast channel with $L$ virtual receivers each corresponding to a fading state. By applying the superposition coding in \cite{Cover:IT:72}, the encoding and decoding procedures can be described as follows.

During the encoding, we assume that $L$ layers are used. That is, the transmitted codeword is a superposition of $L$ codewords, i.e.,
  $\sum_{l=1}^{L} \xv^{[l]},$
 where $\xv^{[l]}$ is a codeword from a Gaussian codebook $\Cc^{[l]}$ with a rate $r^{[l]}$
and a constant power $p^{[l]}$, $l=1, \dots, L$. For a given power allocation $\{p^{[l]}\}$, the rate of the $l$-th layer is given by \footnote{All logarithms are to the natural base, and thus rates are in terms of nats per second per Hertz.}
\begin{equation}\label{ll-rate}
r^{[l]}=\log\left(1+\frac{h^{[l]}p^{[l]}}{1+h^{[l]}\sum_{i=l+1}^{L}p^{[i]}}\right),
\end{equation}
and the total power satisfies $\sum_{l=1}^{L} p^{[l]} = P$.

During the decoding, for a given fading realization $h^{[l]}$, the receiver can successfully decode the first $l$ layers by using the successive decoding strategy \cite{Cover:IT:72}.
i.e., the codewords $\{\xv^{[1]}, \dots, \xv^{[l]}\}$ can be decoded reliably,
while the codewords $\{\xv^{[l+1]}, \dots, \xv^{[L]}\}$ are undecodable. More specifically, in
the decoding process, the receiver first decodes $\xv^{[1]}$ by treating the
remaining codewords ($\{\xv^{[i]}, i>1$\}) as interference. After decoding
$\xv^{[1]}$, the receiver will subtract $\xv^{[1]}$ and then decode $\xv^{[2]}$
by treating the remaining codewords ($\{\xv^{[i]}, i>2$\}) as interference.
This process repeats until the $l$-th layer $\xv^{[l]}$ is decoded reliably by
treating the remaining codewords ($\{\xv^{[i]}, i>l$\}) as interference. As shown in (\ref{ll-rate}), $\sum_{i=l+1}^{L}p^{[i]}$ is the total power of coded layers treated as inference during the decoding of the $l$-th layer. Note that this predetermined ordering can be achieved because of the degraded nature
of Gaussian single-input single-output (SISO) channels.

\subsection{Layered Broadcast Coding for Gaussian Fading Channels}\label{layer:layered:glc}

In general, $L$ depends on the cardinality of the random channel variable. For a Gaussian fading channel, a continuum of code layers ($L \rightarrow \infty$) is required for achieving the best performance.
When a continuum of layers is used, the transmitter sends an
infinite number of layers of coded information. Each layer conveys
a fractional rate, denoted by $dR$, whose value depends on the
index of the layer. We refer to $s$, the realization of the fading
power, as a continuous index. For a given transmit power distribution $\rho(s)$ over coded layer $s$, $\rho(s)ds$ is the transmit power used by layer $s$. Any layer indexed by $u$ satisfying $u>s$ is undecodable and functions as additional interference. The total power of undecodable layers (for a realization of fading power $s$) is denoted by $I(s)$ and is expressed by
\begin{equation}\label{Interferences}
I(s)=\int_{s}^{\infty}\rho(u)du.
\end{equation}

The incremental differential rate of layer $s$ is
given by
\begin{equation}\label{dR}
  dR(s)=\log\left(1+\frac{s\rho(s)ds}{1+sI(s)}\right)=\frac{s\rho(s)ds}{1+sI(s)},
\end{equation}
where the second equality in (\ref{dR}) is due to the fact that $\lim_{x \rightarrow 0}\log(1+x)=x$ for any $x \geq 0$. The total power over all layers is constrained by
\begin{equation}\label{totalpower}
I(0)=\int_{0}^{\infty}\rho(u)du=P.
\end{equation}

Given a realization of the fading power (or layer index) $s$, the decodable rate at the receiver is
\begin{equation}\label{decodablerate}
    R(s) = \int_{0}^{s}\frac{u\rho(u)du}{1+uI(u)}.
\end{equation}
Hence, for a given CDF of the random fading power $s$ denoted by $F(s)$,  the average decodable rate at the receiver is
\begin{equation}\label{averagedecodablerate}
    R = \int_{0}^{\infty}\int_{0}^{s}\frac{u\rho(u)du}{1+uI(u)}dF(s).
\end{equation}

\subsection{Secret-Key Generation Based on Layered Broadcast Coding}\label{layer:scheme}

In this section, we discuss key generation based on Gaussian layered broadcast
coding. We outline the scheme for the continuous case when $L \rightarrow
\infty$, which is the focus of this paper. For an $L$-state fading wiretap channel when $L$ is finite, the corresponding scheme is discussed in Appendix \ref{layer-achievability}.

\subsubsection{Codebook Construction}\label{layer:scheme:codebook}

We need two types of codebooks used for the communication and key-generation phases, respectively.

The codebook used for the communication phase consists of a continuum of coded layers represented by $\{\Cc^{[s]}(2^{NdR(s)},N)\}$, where $N$ is the codeword length and $dR(s)$ is the (incremental differential) rate at layer $s$. The (sub-)codebook for each layer is generated randomly and independently. That is, for any codebook $\Cc^{[s]}(2^{NdR(s)},N)$, we generate $2^{NdR(s)}$ codewords $\xv^{[s]}(w)$, where $w=1,2,\dots, 2^{NdR(s)}$, by choosing the $N{2^{NdR(s)}}$ Gaussian symbols (with power $\rho(s)ds$) independently at random.

The codebook used for the key generation phase is based on Wyner's secrecy coding \cite{Wyner:BSTJ:75, Gopala:IT:08}. As shown in Fig. \ref{fig:keybook}, we use
\begin{equation}\label{regularrate}
    R = \int_{0}^{\infty}\int_{0}^{h_1}\frac{s\rho(s)ds}{1+sI(s)}dF_1(h_1)
\end{equation}
to represent the average decodable rate at Bob. We first generate all binary sequences of length $n(R-\epsilon)$, denoted by $\Bc$, where $n=MN$. The sequences $\Bc$ are then randomly and uniformly grouped
into $K=2^{nR_s}$ bins each with $n(R-R_s-\epsilon)$ sequences, where $R_s$ is the achievable secrecy rate given later. We denote by $\vv(k,j)$ the $j$-th codeword in the $k$-th bin, where $1 \leq k \leq K$ and $1 \leq j \leq J=2^{n(R-R_s-\epsilon)}$. Each secret key $k \in \{1, \dots, K\}$ is then randomly assigned to a bin, denoted by
$\Bc(k)=\{\vv(k,j), j=1,\dots, J\}$.

\subsubsection{Communication Phase}\label{layer:scheme:comm}

The communication takes places over $M$ time slots. In time slot $m \in {[1,\dots,M]}$, Alice first randomly selects a message $W_m^{[s]}\in \{1,\dots, 2^{NdR(s)}\}$ for coded layer $s$, independent of the message chosen for other layers. For convenience, we use $W_m$ to represent the total message sent in time slot $m$ (through all layers), i.e., $W_m = \times_{s} W_m^{[s]}$. Then, Alice sends a superposition of all layers to the channel.

Bob receives $\yv_{1m}$ and tries to decode all his decodable layers, which depends on his channel state $h_{1m}$. For convenience, we use $W_m^{[\Dd_1]}$ to denote the set of layers reliably decoded by Bob, and $W_m^{[\Du_1]}$ to denote the set of layers undecodable to Bob in time slot $m$.\footnote{To be more accurate, $\Dd_1$ in $W_m^{[\Dd_1]}$ should be indexed by $m$, however, we choose to use $\Dd_1$ to simplify our notation. Throughout the paper, $W_m^{[\Dd_1]}$ is shorthand for $W_m^{[\Dd_{1m}]}$. If the subscript of $W$ is a set, then $\Dd_1$ is also indexed by the set. For example, for a set of time slots $\Mc^{+} \subseteq \{1,\dots,M\}$, we use $W_{\Mc^{+}}^{[\Dd_{1}]}$ instead of $W_{\Mc^{+}}^{[\Dd_{1{\Mc^{+}}}]}$ to represent all the messages decoded by Bob in $\Mc^{+}$. The rule is also applied to $\Dd_2$, $\Du_1$ and $\Du_2$. In addition, it is applied to codeword $\Xv$ and codebook $\Cc$ besides message $W$.}  After decoding, Bob sends back the index of the highest decodable layer to Alice via the feedback channel, so that both Alice and Bob get to know $W_m$.  This completes the transmission in time slot $m$. The communication phase ends when all $M$ (independent) transmissions are completed.

Note that the feedback of a layer index does not need to be completed right after each transmission in the forward channel. It is required only before the following key generation phase. Also note that the feedback of the index of a decodable layer is a special type of channel feedback. In particular, when considering the case when the number of fading states $L \rightarrow \infty$, the index of the highest decodable layer in time slot $m$ is equal to the fading power gain $h_{1m}$ (i.e., the public feedback $\Psi_m=h_{1m}$). For a finite level layered coding approach, the feedback of the layer index is an $L$-bit quantized version of the realization of the fading power gain. When $L=1$, it is the ARQ feedback of ACK or NACK.

\subsubsection{Key-Generation Phase}\label{layer:scheme:keygen}

\begin{figure*}
  \centering
  \includegraphics[width=0.85\linewidth]{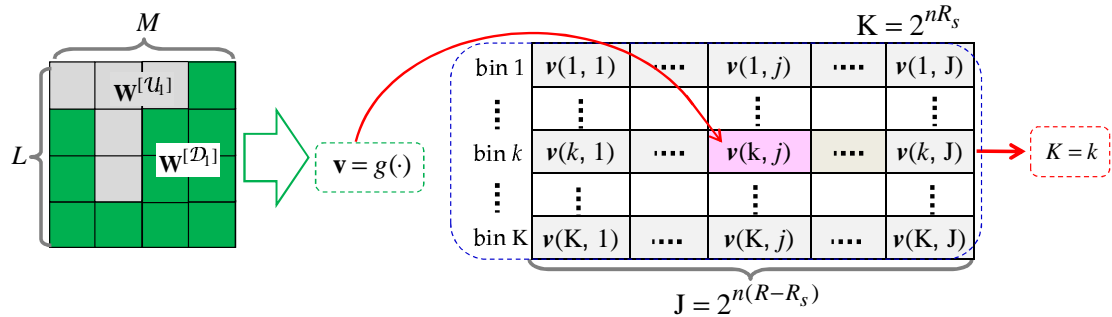}\\
  \caption{Alice and Bob generate a sequence $\vv$ from all the messages reliably decoded (across $L$
layers and $M$ time slots), look up in the key-generation codebook for a $k$ such that $\vv \in \Bc(k)$, and output $k$ as the key.}\label{fig:keybook}
\end{figure*}

Once the communication phase (including feedback) is completed, both Alice and Bob can generate the
secret key. Based on the feedback sequence $\Psiv = \hv_1$,  Alice generates a
binary sequence $\vv$ from all the messages reliably decoded by Bob based on any deterministic one-to-one mapping $g$ as
\begin{equation}\label{one-2-one}
    \vv=g(\Wv^{[\Dd_1]}),
\end{equation}
where $\Wv^{[\Dd_1]} = (W_1^{[\Dd_1]}, W_2^{[\Dd_1]}, \dots, W_2^{[\Dd_1]})$ represents the set of messages successfully decoded by Bob across all
layers and time slots.

Alice then looks up in the key-generation codebook for a $k$ such that $\vv \in \Bc(k)$, and outputs $k$ as the secret key generated.
Note that all those messages are decoded by Bob, and Bob can generate the same sequence $\vv$ and the same key $k$ as Alice does. This completes the key generation.

\section{Secrecy Key Rate}\label{layer:srate}

In this section, we present the secrecy key rate achieved by the broadcast approach and compare it to that achieved by using a single-level coding approach. For both approaches, we assume that the number of time slots used in the transmission over the forward channel is sufficiently large (i.e., $M \rightarrow \infty$), so that we can obtain an ergodic key rate.

\subsection{Layered-Broadcast-Coding Based Key Generation}

The following result characterizes the secrecy rate when a power distribution $\rho(s)$ is given.

\begin{theorem}\label{th:srate_cnt}
For a given power distribution $\rho(s)$ over coded layers indexed by $s$, the secrecy key rate achieved by the layered-broadcast-coding based key generation scheme is
\begin{equation}\label{srate_cnt}
    R_s=\int_{0}^{\infty}\int_{0}^{h_1}\Delta(h_1,h_2)dF_2(h_2)dF_1(h_1),
\end{equation}
where $\Delta(h_1,h_2)$ is given by
\begin{equation}\label{deltas}
\Delta(h_1,h_2)=\int_{h_2}^{h_1}\left[\frac{s\rho(s)}{1+sI(s)}-\frac{h_2\rho(s)}{1+h_2I(s)}\right]ds
\end{equation}
and
\begin{equation}\label{Rlterm}
    I(s)= \int_{s}^{\infty}\rho(u)du \mbox{~~ with~$I(0)=P$}.
\end{equation}
\end{theorem}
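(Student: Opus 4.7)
The plan is to prove Theorem 1 by first establishing the analogous achievability result for the $L$-state fading wiretap channel of Definition 1 (using the finite-level superposition scheme of Section III-A) and then passing to the continuous limit $L\to\infty$. For finite $L$, fix the power allocation $\{p^{[l]}\}$ and consider one time slot with channel realizations $(h^{[l_1]},h^{[l_2]})$. By the successive decoding argument already invoked in Section III-A, Bob reliably recovers $W^{[1]},\dots,W^{[l_1]}$ at the rates $r^{[l]}$ given in (6). For each layer $l\in(l_2,l_1]$, I would upper bound Eve's information about $W^{[l]}$ by the genie-aided conditional mutual information $I(W^{[l]};\Yv_2\mid W^{[1]},\dots,W^{[l-1]},\hv_2)$; since Eve's residual signal after subtracting the lower layers is Gaussian with signal power $h^{[l_2]}p^{[l]}$ and the remaining undecodable layers act as Gaussian-equivalent interference of power $h^{[l_2]}\sum_{i>l}p^{[i]}$, this leakage is bounded by $\log\bigl(1+h^{[l_2]}p^{[l]}/(1+h^{[l_2]}\sum_{i>l}p^{[i]})\bigr)$.

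Aggregating across $M$ slots, Bob's reliably decoded messages $\Wv^{[\Dd_1]}$ are mapped by the bijection $g$ in (10) to a binary sequence $\vv$ of total rate $R$ given in (9). Wyner's secrecy binning partitions $\vv$ into $2^{nR_s}$ bins of size roughly $2^{n(R-R_s-\epsilon)}$; the key is the bin index, which Alice recovers from $\vv$ (after receiving $\Psiv=\hv_1$) and Bob recovers from his own decoded layers. Reliability (7) is immediate since both parties compute the same bin index. The standard wiretap equivocation calculation then yields
\begin{equation*}
\tfrac{1}{n}H(K\mid \Yv_2,\hv_2,\Psiv) \geq R_s - \epsilon,
\end{equation*}
provided the bin size $R-R_s$ exceeds Eve's ergodic per-slot leakage, which by the bound above equals the expectation over $(h_1,h_2)$ of $\sum_{l:h^{[l_2]}<h^{[l]}\leq h^{[l_1]}}\log\bigl(1+h_2 p^{[l]}/(1+h_2\sum_{i>l}p^{[i]})\bigr)$.

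Passing to $L\to\infty$, the allocation $\{p^{[l]}\}$ is replaced by the density $\rho(s)$, the tail sum $\sum_{i>l}p^{[i]}$ converges to $I(s)=\int_s^\infty\rho(u)du$, and each log of an infinitesimal argument collapses via $\log(1+x)\sim x$, so
\begin{equation*}
r^{[l]}\longrightarrow\frac{s\rho(s)\,ds}{1+sI(s)},\qquad \log\!\Bigl(1+\tfrac{h_2\rho(s)\,ds}{1+h_2 I(s)}\Bigr)\longrightarrow\frac{h_2\rho(s)\,ds}{1+h_2 I(s)}.
\end{equation*}
The discrete range $l\in(l_2,l_1]$ becomes the integration range $s\in(h_2,h_1]$, and averaging the resulting per-slot difference against the independent fading distribution $dF_1(h_1)dF_2(h_2)$ assembles exactly (11)--(13). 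When $h_2>h_1$, Eve can decode every layer Bob does, so no secrecy can be harvested, justifying the restriction of the outer integral to $h_2\leq h_1$.

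The main obstacle is verifying rigorously that Eve's leakage cannot exceed the per-layer genie-aided term above, uniformly over any (possibly joint across layers and slots) decoding strategy she may attempt. This rests on three points: (i) the per-slot Gaussian SISO broadcast channel is physically degraded, so conditioning on the lower layers leaves a Gaussian residual with exactly the claimed SNR; (ii) the layer codebooks are independent Gaussians, so the undecodable-layer interference at Eve is statistically Gaussian in the codebook-averaged sense, making a Gaussian mutual-information upper bound tight; and (iii) the channel-state independence across slots permits a single-letter, per-slot bound to be summed into an ergodic leakage via the law of large numbers as $M\to\infty$. Making (i)--(iii) precise, together with handling the measurability of the continuum-indexed code as a limit of finite quantizations, is where I expect the bulk of the technical work to lie.
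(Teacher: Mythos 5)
Your proposal follows the paper's architecture at the top level --- prove an achievability result for the $L$-state channel, then pass to the continuum with $\log(1+x)\sim x$ --- and your rate bookkeeping and reward interpretation are correct. The gap is in the equivocation step. The paper does not bound Eve's leakage by a per-layer conditional mutual information and then invoke ``the standard wiretap calculation''; it writes $H(K|\Yv_2,\Psiv,\hv_2)\geq H(\Xv|\Yv_2,\Wtv^{[\Du_1]},\hv_1,\hv_2)-H(\Xv|\Yv_2,K,\Wtv^{[\Du_1]},\hv_1,\hv_2)$ and must drive the second term to zero via Fano, i.e., it must exhibit an explicit decoder by which Eve, given the key bin, recovers the \emph{entire} transmitted signal --- including the self-interference layers $\Xv^{[\Du_1]}$ that Bob himself cannot decode. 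This is precisely the point you defer to ``the bulk of the technical work,'' and your items (i)--(iii) do not resolve it: the observation that the undecodable-layer interference is ``statistically Gaussian in the codebook-averaged sense'' yields at best a mutual-information upper bound for a Gaussian input ensemble, whereas what is needed is a decodability statement for the actual random code so that $H(\Xv|\Yv_2,K,\cdot)\to 0$; without that, the chain from your per-layer leakage bound to a bound on $H(K|\Yv_2,\hv_2,\Psiv)$ is not established.

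The paper's mechanism is the genie/message-splitting device, which is absent from your proposal. Each interference message $W_m^{[l]}$ with $l\in\Du_{1m}$ is split as $\Wh_m^{[l]}\times\Wt_m^{[l]}$, where $\rh_m^{[l]}$ in (\ref{ri1}) is set to Eve's decodable rate for that layer treating higher layers as noise; the genie hands $\Wtv^{[\Du_1]}$ to Eve so that the thinned codebooks become decodable by successive cancellation, and a separate argument (Fig.~\ref{fig:genieleak}) checks that this side information does not increase what Eve learns about the key-bearing layers. The layers $\Xv^{[\Dd_1]}\cap\Xv^{[\Du_2]}$ are then recovered by intersecting a joint-typicality list with the bin $\Bc(K)$, and the list-size bound is where the term $\log\bigl(1+h_2[I(h_2)-I(h_1)]/(1+h_2I(h_1))\bigr)$ actually enters the proof. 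Two smaller inaccuracies: your bin-size condition accounts only for the partial leakage from layers in $(l_2,l_1]$, whereas $R-R_s$ must equal Eve's full information about $\Wv^{[\Dd_1]}$, including the layers $l\leq l_2$ she decodes outright; and the bin-size rate must \emph{match} Eve's information (the Fano step needs $R-R_s$ no larger than it, the entropy lower bound needs it no smaller), so ``exceeds'' states only half the constraint.
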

\begin{proof}
The proof can be found in Appendix \ref{layer-achievability}.
\end{proof}

Now we discuss some insights from Theorem \ref{th:srate_cnt}. First, $R_s$ can be written as
\begin{equation}\label{srate_cntv}
  R_s=\E_{h_1,h_2}\left[\tilde{\Delta}(h_1,h_2)\right],
\end{equation}
where
\begin{equation}\label{Delta2}
\tilde{\Delta}(h_1,h_2) = \left\{ \begin{array}{ll}
 \Delta(h_1,h_2) &\mbox{ if $h_1 > h_2$} \\
  0 &\mbox{ otherwise.}
  \end{array} \right.
\end{equation}
The key rate $R_s$ is the average of rewards (designated by
$\tilde{\Delta}(h_1,h_2)$) collected from all possible channel realizations.
Positive rewards are obtained from the time slots in which Bob's channel is
better than Eve's channel ($h_1 > h_2$). On the other hand, when $h_1 \leq
h_2$, the reward is zero.

We can see that except for the rare case in which $h_1$ is always smaller than $h_2$, $R_s$ is positive.

Now we focus on a particular time slot $m$ in which $h_1>h_2$, and use $\xv_m$ to denote all
layers sent in the slot. \footnote{$\xv_m$ represents the set of $L$ layers in time slot $m$, and also the signal transmitted by Alice in time slot $m$, which is the superposition of all layers.}  As depicted in Fig. \ref{fig:layering}, $\xv_m$ can be
divided as
\begin{equation}\label{layerset}
\xv_m=\xv_m^{[\Dd_2]} \cup \left(\xv_m^{[\Dd_1]} \cap \xv_m^{[\Du_2]} \right) \cup \xv_m^{[\Du_1]},
\end{equation}
where $\xv_m^{[\Dd_1]}$ and $\xv_m^{[\Du_1]}$ denote the sets of decodable and
undecodable layers at Bob, respectively, and $\xv_m^{[\Dd_2]}$ and
$\xv_m^{[\Du_2]}$ denote the sets of decodable and undecodable layers at
Eve, respectively. Note that $\xv_m^{[\Dd_1]} \supset \xv_m^{[\Dd_2]}$ since
$h_1>h_2$.

Both Alice and Bob can decode $\xv_m^{[\Dd_2]}$, and neither of them can decode $\xv_m^{[\Du_1]}$. Therefore, a nonzero reward $\Delta(h_1,h_2)$ comes from the set of layers $\xv_m^{[\Dd_1]} \cap \xv_m^{[\Du_2]}$. To show this, we rewrite (\ref{deltas}) as
\begin{equation}\label{deltas_f2}
\Delta(h_1,h_2)=\int_{h_2}^{h_1}\frac{s\rho(s)ds}{1+sI(s)}-\int_{h_2}^{h_1}\frac{h_2\rho(s)ds}{1+h_2I(s)}.
\end{equation}
The first term on the right hand side of (\ref{deltas_f2}) is the sum-rate decoded by Bob from $\xv_m^{[\Dd_1]} \cap \xv_m^{[\Du_2]}$ (by decoding and canceling $\xv_m^{[\Dd_2]}$ first, and treating the interference term $\xv_m^{[\Du_1]}$ as noise). Furthermore, the second term can be written as
\begin{equation}\label{deltas_f2_t2}
\int_{h_2}^{h_1}\frac{h_2\rho(s)ds}{1+h_2I(s)}=\log\left(1+\frac{h_2\left[I(h_2)-I(h_1)\right]}{1+h_2I(h_1)}\right).
\end{equation}
By noticing that $I(h_2)-I(h_1)$ is the total power used for the layers $\xv_m^{[\Dd_1]} \cap \xv_m^{[\Du_2]}$, and $I(h_1)$ is the total power used for the layers $\xv_m^{[\Du_1]}$, (\ref{deltas_f2_t2}) gives the rate of information that Eve can possibly deduce from $\xv_m^{[\Dd_1]} \cap \xv_m^{[\Du_2]}$ through her channel with power gain $h_2$.

An interesting finding here is that what the best Eve can do is to treat the interference term $\xv_m^{[\Du_1]}$ as noise (as Bob does) with the total noise power $1+h_2I(h_1)$, and therefore cannot benefit from the structure of interference either.
Due to the absence of CSI at the transmitter during the transmission in the forward channel , the
layered broadcast coding strategy creates a medium with interference, in which the
undecodable layers play the role of \emph{self-interference}. We remark that this is a special case of secret communication over a medium with interference as discussed in \cite{Tang:arXiv:09A}.

\begin{figure}[t]
  \centering
  \includegraphics[width=0.6\linewidth]{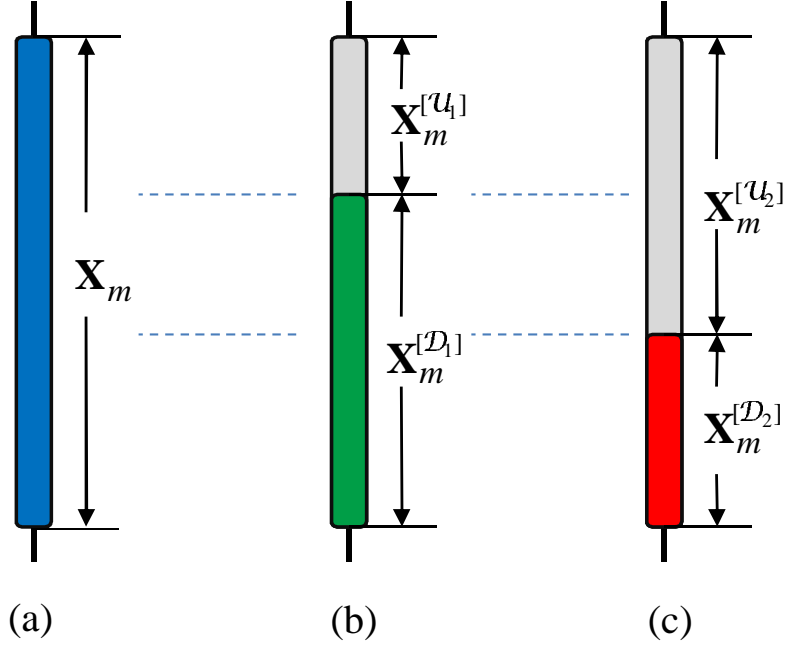}\\
  \caption{(a) Coded layers sent by Alice, (b) decodable and undecodable layers for Bob, and
  (c) decodable and undecodable layers for Eve, in time slot $m$ with the channel gains $h_1>h_2$. }
  \label{fig:layering}
\end{figure}

\subsection{Single-Level-Coding Based Key Generation}

When single-level coding is used, self-interference does not occur. Alice uses a codebook with a single coding rate in the forward transmission. Bob uses ARQ feedback to tell Alice whether the decoding is successful or has failed.  In this case, the following secrecy key rate can be achieved.
\begin{lemma}\cite[Theorem~$1$]{Ghany:ICC:09}\label{single-level-lemma}
The secrecy key rate of a single-level-coding based scheme is given by
\begin{equation}\label{srate-onelevel}
    R_{s}^{[1]}=\Pr\left[R^{[1]} \leq \log(1+h_1P)\right]\E_{h_2}\left[R^{[1]}- \log\left(1+h_2P\right)\right]^{+},
\end{equation}
where $R^{[1]}$ is the coding rate of the single-level codebook.
\end{lemma}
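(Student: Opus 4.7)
The plan is to derive Lemma \ref{single-level-lemma} as a direct specialization of the scheme in Section \ref{layer:scheme}, collapsing the continuum of layers to a single level of rate $R^{[1]}$ and power $P$. In this degenerate case the feedback $\Psi_m$ reduces to an ARQ ACK/NACK bit and the claim follows by running the two-phase protocol of Section \ref{layer:model:protocol} essentially verbatim with this simplified codebook.

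For the communication phase, in each slot $m$ Alice would draw $W_m$ uniformly from $\{1,\dots,2^{NR^{[1]}}\}$ and transmit the associated length-$N$ Gaussian codeword of per-symbol power $P$. By the standard channel coding theorem, Bob decodes with vanishing error probability exactly when $\log(1+h_{1,m}P)\geq R^{[1]}$, and feeds back $\Psi_m=1$ on success and $\Psi_m=0$ on failure. Let $\Mc^{+}=\{m:\Psi_m=1\}$; since the pairs $(h_{1,m},h_{2,m})$ are i.i.d.\ across $m$, the law of large numbers gives $|\Mc^{+}|/M\to p:=\Pr[R^{[1]}\leq\log(1+h_1P)]$ almost surely. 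In the key-generation phase Alice and Bob form $\vv=g(W_{\Mc^{+}})$, whose rate tends to $pR^{[1]}$ nats per forward channel use, and apply Wyner's random binning into $2^{nR_s^{[1]}}$ bins exactly as in Section \ref{layer:scheme:codebook} to produce the key; reliability \eqref{reliable-key} is immediate because Alice and Bob agree on every symbol of $\vv$.

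The core step is verifying the secrecy condition \eqref{secret-key}. Conditioned on $\Psi_m=1$, the signal Eve observes in slot $m$ is a fresh rate-$R^{[1]}$ Gaussian codeword received through a Gaussian channel of gain $h_{2,m}$, so her equivocation about $W_m$ is at least $N[R^{[1]}-\log(1+h_{2,m}P)]^{+}$ nats. Summing across $\Mc^{+}$, normalizing by $n=MN$, and using the independence of $\hv_1$ and $\hv_2$ so that the restriction of $\{h_{2,m}\}$ to $\Mc^{+}$ remains i.i.d.\ from $F_2$, the law of large numbers yields $\frac{1}{n}H(W_{\Mc^{+}}\mid\Yv_2,\hv_2,\Psiv)\to p\,\E_{h_2}[R^{[1]}-\log(1+h_2P)]^{+}$. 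A Wyner binning argument identical to the one used in Appendix \ref{layer-achievability} for Theorem \ref{th:srate_cnt} then transfers this per-message equivocation to the bin index and gives the claimed $R_s^{[1]}$.

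The main obstacle I anticipate is the randomness of $\Mc^{+}$: the string $\vv$ whose binning produces the key has a random length, so one must build a small margin $\epsilon$ into the rate and separately bound the atypical event that $|\Mc^{+}|/M$ departs from $p$, showing by standard concentration that it contributes negligibly to both the error probability and to the equivocation. Handling this cleanly---while also checking that revealing the ARQ bits $\Psiv$ to Eve (which is inevitable since the feedback channel is public) cannot couple slots or leak side information beyond the per-slot bound above---is the only delicate piece; everything else is a direct specialization of the machinery already developed for Theorem \ref{th:srate_cnt}.
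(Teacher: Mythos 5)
The paper does not actually prove this lemma: it is imported verbatim from \cite[Theorem~1]{Ghany:ICC:09}, and the only related derivation in the paper (Appendix~\ref{proof_onelevel_term1}) starts \emph{from} the lemma to obtain the closed form (\ref{rate_l1}) for Rayleigh fading. So there is no in-paper proof to match against; what you have written is a self-contained achievability argument obtained by specializing the machinery of Theorem~\ref{th:srate_cnt} (Appendices~\ref{layer-achievability}--\ref{pf_ET2}) to a single coded layer, which is exactly the degeneration the authors themselves point to in Section~IV.C when they call single-level coding ``a special case of a layered-broadcast-coding based scheme, in which all power is allocated to a single layer.'' Your route is sound and arguably more informative than the citation, since it makes the lemma a corollary of the paper's own framework rather than an external input; the reward function you obtain matches (\ref{Delta2_1layer}), including the correct truncation of the contribution to slots with $h_1\geq (e^{R^{[1]}}-1)/P> h_2$. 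Two caveats worth making explicit if you flesh this out: first, the per-slot equivocation bound $N[R^{[1]}-\log(1+h_{2,m}P)]^{+}$ is not ``immediate'' from the channel coding theorem --- transferring it to the bin index requires the genie/list-decoding argument of Appendices~\ref{pf_ET1} and~\ref{pf_ET2} (here simplified because there is no interference layer, so the genie message $\Wtv^{[\Du_1]}$ is empty on successful slots); second, you prove only achievability, whereas the lemma's phrasing ``is given by'' could be read as an exact characterization --- the paper only ever uses the achievability direction, so this is harmless, but you should state it. Your handling of the random size of $\Mc^{+}$ by concentration mirrors what the paper does with the counts $\#\bigl(h_1=h^{[l_1]},h_2=h^{[l_2]}\bigr)$ as $M\rightarrow\infty$, so that part is consistent with the paper's own treatment.
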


This key rate $R_{s}^{[1]}$ still has the interpretation of the average of rewards (designated by
$\tilde{\Delta}_1(h_1,h_2)$) collected from all possible channel realizations. That is, $R_{s}^{[1]}$ can be written as
\begin{equation}\label{srate_cntv_1layer}
  R_{s}^{[1]}=\E_{h_1,h_2}\left[\tilde{\Delta}_1(h_1,h_2)\right],
\end{equation}
where
\begin{equation}\label{Delta2_1layer}
\tilde{\Delta}_1(h_1,h_2) = \left\{ \begin{array}{ll}
 R^{[1]}-\log(1+h_2P) &\mbox{ if $h_1 \geq \frac{\exp(R^{[1]})-1}{P} > h_2$} \\
  0 &\mbox{ otherwise.}
  \end{array} \right.
\end{equation}

\subsection{Comparisons and Discussions}

The advantage of the layered-broadcast-coding (LBC) based approach over the single-level-coding based approach (SLC) can be readily observed by comparing the reward functions given by (\ref{Delta2}) and (\ref{Delta2_1layer}). First, in LBC, a positive reward is obtained from the set of channel pairs $\Pc=\{(h_1,h_2):h_1>h_2\}$; while in SLC, it is obtained from the channel set $\Pc'=\{(h_1,h_2):h_1\geq \frac{1}{P}(e^{R^{[1]}}-1) > h_2\}$. It is obvious that $\Pc \supset \Pc'$, which means there are more time slots that contribute to the secrecy key generation for LBC than for SLC. Second, the coding rate $R^{[1]}$ for SLC has to be carefully chosen in order to balance between obtaining a larger value of reward in a time slot (by increasing $R^{[1]}$) and making more time slots contribute to the key generation (by decreasing $R^{[1]}$); while in LBC, the reward is gained in each time slot adaptively based on the random channel realizations. Finally and importantly, in SLC, Eve can deduce the information at the rate of $\log(1+h_2P)$ with a channel gain $h_2$. This is the loss of rate in order to keep the key secret from Eve. In LBC, however, Eve deduces less information as given by (\ref{deltas_f2_t2}) due to the interference power (the total power of undecodable layers). The self-interference plays an important role for decreasing Eve's capability of eavesdropping.

Hence, although the single-level-coding based approach has lower decoding complexity, and requires less feedback (only 1-bit per time slot), it is sub-optimal in general (when feedback of multiple bits is allowed).
By all means, the single-level coding scheme can be considered as a special case of a layered-broadcast-coding based scheme, in which all power is allocated to a single layer. It serves as a baseline scheme and further motivates us to find the best power distribution for optimizing the layered-broadcast-coding scheme.

\section{Optimal Power Distribution}\label{layer:power}

In this section, we derive the optimal distribution of power over coded layers for our broadcast approach.
The secrecy rate given by (\ref{srate_cnt}) is hard to evaluate and optimize due to the three-dimensional integrals. After some steps of derivations, we have an alternative form given as follows:

\begin{lemma}\label{lm:srate2_cnt}
The secrecy key rate given by (\ref{srate_cnt}) is equivalent to
\begin{equation}\label{srate2_cnt}
    R_s=\max_{I(x)} \int_{0}^{\infty}\left[1-F_1(x)\right]\rho(x)\left[\int_{0}^{x}\frac{F_2(y)dy}{[1+yI(x)]^{2}}\right]dx,
\end{equation}
with the constraint $I(0)=P$, and $\rho(x)=-dI(x)/dx$.
\end{lemma}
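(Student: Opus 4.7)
The plan is to start from the three--fold integral obtained by substituting (\ref{deltas}) into (\ref{srate_cnt}), namely
\begin{equation*}
R_s = \int_0^\infty \int_0^{h_1} \int_{h_2}^{h_1} \rho(s)\left[\frac{s}{1+sI(s)} - \frac{h_2}{1+h_2 I(s)}\right] ds\, dF_2(h_2)\, dF_1(h_1),
\end{equation*}
and to apply Fubini's theorem to promote $s$ (which will become the outer variable $x$) to the outside. The domain of integration is $\{0 \le h_2 \le s \le h_1 < \infty\}$; carrying out the $dF_1(h_1)$ integration over $h_1 \in [s,\infty)$ first produces a factor of $1 - F_1(s)$ and, pulling the now $h_2$--independent $\rho(s)$ out of the $dF_2$ integral, leaves
\begin{equation*}
R_s = \int_0^\infty [1-F_1(s)]\,\rho(s)\int_0^s \left[\frac{s}{1+sI(s)} - \frac{h_2}{1+h_2 I(s)}\right] dF_2(h_2)\, ds.
\end{equation*}

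Next I would process the inner $h_2$--integral by a single integration by parts. For fixed $s$ (so $I(s)$ is a constant in $h_2$), define $g(y) \triangleq \frac{s}{1+sI(s)} - \frac{y}{1+yI(s)}$; a direct calculation gives $g'(y) = -[1+yI(s)]^{-2}$, and at $y=s$ one has $g(s)=0$. Using $F_2(0)=0$ to kill the lower boundary term and $g(s)=0$ to kill the upper one,
\begin{equation*}
\int_0^s g(y)\, dF_2(y) = -\int_0^s g'(y)\, F_2(y)\, dy = \int_0^s \frac{F_2(y)\, dy}{[1+yI(s)]^2}.
\end{equation*}
Substituting this back and renaming $s \to x$ reproduces exactly the bracketed double integral on the right--hand side of (\ref{srate2_cnt}).

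Regarding the $\max_{I(x)}$ that appears in the lemma's statement: the manipulation above is a \emph{pointwise} identity between $R_s$ and the claimed integral for every admissible power profile $\rho$ (equivalently, every non--increasing $I$ with $I(0)=P$), so taking a supremum on both sides yields the reformulation of the maximization over $\rho$ as a maximization over $I(\cdot)$, which is the form used in Section~\ref{layer:power}. The only delicate point in the plan is justifying the vanishing of both boundary terms in the integration by parts, which rests on the algebraic cancellation $g(s)=0$ and on $F_2(0)=0$, both of which hold under the paper's standing assumption that the fading gains are a.s.\ positive. Everything else is bookkeeping from Fubini's theorem, and I do not anticipate any further obstacle.
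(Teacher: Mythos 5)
Your proof is correct, and it takes a genuinely different route from the paper's. The paper keeps the original integration order with $h_1$ outermost, splits $R_s=T_1-T_2$ according to the two terms of $\Delta(h_1,h_2)$, and then works each piece separately: two successive integrations by parts for $T_1$, and for $T_2$ a considerably longer computation that requires differentiating $T_{2i}(h_1)$ under the integral sign via the Leibniz rule, followed by one more integration by parts to merge the two results into the final form. You instead apply Fubini on the simplex $\{0\le h_2\le s\le h_1\}$ so that the $h_1$-integration collapses to the factor $1-F_1(s)$, and then perform a \emph{single} integration by parts in $h_2$ on the combined integrand $g(y)=\tfrac{s}{1+sI(s)}-\tfrac{y}{1+yI(s)}$, exploiting the algebraic cancellation $g(s)=0$ together with $F_2(0)=0$ to kill both boundary terms and the identity $g'(y)=-[1+yI(s)]^{-2}$ to produce the kernel of (\ref{srate2_cnt}) directly. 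Your argument is shorter and avoids the interchange of differentiation and integration entirely; the paper's term-by-term route has the minor side benefit of producing the intermediate closed forms (\ref{t1_eval}) and (\ref{t2_eval}) for the legitimate and eavesdropper rate contributions separately, but for establishing the lemma itself your approach is cleaner. Your remark on how the $\max_{I(x)}$ in the statement is inherited from the pointwise identity is also apt, since the paper's proof is silent on that point.
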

\begin{proof}
The proof can be found in Appendix \ref{srate2}.
\end{proof}

\subsection{Optimal Interference Distribution}

In certain cases, optimization of $R_s$ with respect to the power distribution $\rho(x)$, or equivalently, the interference distribution $I(x)$, under the power constraint $P$ can be found by using the calculus of variations. First, we define the functional of (\ref{srate2_cnt}) as
\begin{equation*}\label{functional}
    L\left(x,I(x),I'(x)\right)=-\left[1-F_1(x)\right]I'(x)\left[\int_{0}^{x}\frac{F_2(y)dy}{[1+yI(x)]^{2}}\right].
\end{equation*}
A necessary condition for a maximum of the integral of $L(x,I(x),I'(x))$ over
$x$ is a zero variation of the functional. By solving the associated
E\"{u}ler-Lagrangian equation \cite{Gelfand:63} given as
\begin{equation}\label{euler}
    \frac{\partial L}{\partial I}-\frac{d}{dx}\left(\frac{\partial L}{\partial I'}\right)=0,
\end{equation}
we have the following characterization for the optimal $I(x)$.

\begin{theorem}\label{lm:Iterm_cnt}
A necessary condition for optimizing $I(x)$ in order to maximize the secrecy rate given by (\ref{srate2_cnt}) is to choose $I(x)$ to satisfy
\begin{equation}\label{i_cnt}
\int_{0}^{x}\frac{F_2(y)dy}{[1+yI(x)]^2}=\frac{\left[1-F_1(x)\right]F_2(x)}{f_1(x)\left[1+xI(x)\right]^2},
\end{equation}
where $I(x)=0$ when $x < x_0$ or $x \geq x_1$. Here, $x_0$ and $x_1$ can be found by setting $I(x_0)=P$ and $I(x_1)=0$ in (\ref{i_cnt}).
\end{theorem}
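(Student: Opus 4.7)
The plan is to recognize the condition (\ref{i_cnt}) as the Euler-Lagrange equation associated with the variational problem in Lemma \ref{lm:srate2_cnt}. With the Lagrangian
$$L\bigl(x,I,I'\bigr)=-[1-F_1(x)]\,I'\int_{0}^{x}\frac{F_2(y)\,dy}{[1+yI]^{2}},$$
the goal is to maximize $\int_{0}^{\infty}L\,dx$ subject to $I(0)=P$, $\lim_{x\to\infty}I(x)=0$, and the monotonicity constraint $I'(x)\le 0$ (equivalently $\rho(x)\ge 0$). On the support of $\rho$, the sign constraint is slack and the classical Euler-Lagrange equation $\partial L/\partial I-\tfrac{d}{dx}(\partial L/\partial I')=0$ supplies the required necessary condition.

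First I would compute the two partial derivatives. Differentiating under the integral with respect to the parameter $I$ produces a factor $-2y/[1+yI]^{3}$, so
$$\frac{\partial L}{\partial I}=2[1-F_1(x)]\,I'(x)\int_{0}^{x}\frac{y F_2(y)\,dy}{[1+yI(x)]^{3}},$$
and linearity of $L$ in $I'$ gives
$$\frac{\partial L}{\partial I'}=-[1-F_1(x)]\int_{0}^{x}\frac{F_2(y)\,dy}{[1+yI(x)]^{2}}.$$

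Next I would take the total derivative of $\partial L/\partial I'$ with respect to $x$, applying the product rule to the prefactor $1-F_1(x)$ and Leibniz's rule to the integral. Leibniz contributes two pieces: a boundary term $F_2(x)/[1+xI(x)]^{2}$ from the upper limit, and a chain-rule term $-2I'(x)\int_{0}^{x}y F_2(y)/[1+yI(x)]^{3}\,dy$ from the implicit $I(x)$-dependence of the integrand. The key observation is that this chain-rule piece equals $\partial L/\partial I$ up to a sign, so the two cubic-denominator integrals cancel in the Euler-Lagrange equation, leaving
$$-f_1(x)\int_{0}^{x}\frac{F_2(y)\,dy}{[1+yI(x)]^{2}}+[1-F_1(x)]\frac{F_2(x)}{[1+xI(x)]^{2}}=0,$$
which rearranges into (\ref{i_cnt}).

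Finally I would handle the boundary. Outside the support of $\rho$ the sign constraint is active and $I$ is constant: on $[0,x_0)$ one has $I(x)=P$ by the total-power constraint $I(0)=P$, while on $[x_1,\infty)$ one has $I(x)=0$ since the total power is finite. Substituting $I(x_0)=P$ and $I(x_1)=0$ into (\ref{i_cnt}) then pins down the two transition indices. I expect the main technical subtlety to be the dual role of $x$ inside the integrand of $\partial L/\partial I'$---it is both the upper limit of integration and an implicit argument through $I(x)$---so the Leibniz derivative must be assembled carefully; the clean cancellation of the two integrals with cubic denominators is the pleasant surprise that makes the final condition so compact.
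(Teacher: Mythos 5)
Your proposal follows essentially the same route as the paper's own proof: the identical functional $L(x,I,I')$, the same computation of $\partial L/\partial I$ and $\partial L/\partial I'$, the same Leibniz-rule differentiation producing the boundary term $F_2(x)/[1+xI(x)]^2$ plus the cubic-denominator integral, and the same cancellation of that integral against $\partial L/\partial I$ in the E\"{u}ler--Lagrange equation. Your additional remarks on the active monotonicity constraint outside the support of $\rho$ and the determination of $x_0$, $x_1$ are consistent with (and slightly more explicit than) the paper's treatment.
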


\begin{proof}
The proof can be found in Appendix \ref{Iterm}.
\end{proof}
\vspace{0.1in}

In general, numerical computation is needed for solving (\ref{i_cnt}) in order to obtain the optimal interference distribution $I(x)$. For some special CDFs $F_2(x)$, an analytical form of $I(x)$ is possible if the integral in (\ref{i_cnt}) can be evaluated in a closed form.

In the following, we consider two of such special cases:

\subsubsection{Non-Fading Alice-Eve Channel}

If the Alice-Eve channel is constant with channel power gain $x^{*}$, the CDF $F_2(x)$ is $F_2(x)=\mu(x-x^{*})$, where $\mu(x)$ represents a unit step function. In this case, the optimal interference distribution is given by
\begin{equation}\label{Evenonfading}
I(x)=\frac{1-F_1(x)-(x-x^{*})f_1(x)}{x(x-x^{*})f_1(x)-x^{*}\left[1-F_1(x)\right]},
\end{equation}
which can be easily shown from (\ref{i_cnt}).

\subsubsection{Non-Secret Layered Transmission}

If key-generation is not considered and it is desired to find the optimal $I(x)$ to maximize the average reliably
decodable rate at Bob in the non-secret layered transmission, this can be done by assuming $x^{*}=0$ in (\ref{Evenonfading}).
In this case, we have
\begin{equation}\label{Nonseclayer}
I(x)=\frac{1-F_1(x)}{x^2f_1(x)}-\frac{1}{x},
\end{equation}
which is consistent with the result given in \cite{Shamai:IT:03}.

\subsection{Secrecy Key Rate With Optimal Power Distribution}

Finally, we have the following secrecy key rate under the optimal power distribution.
\begin{corollary}\label{th:srate2}
When the optimal power distribution is used, the following secrecy key rate is achieved:
\begin{equation}\label{srate_cnt3}
    R_s=\int_{x_0}^{x_1}\frac{-\left[1-F_1(x)\right]^2F_2(x)dI(x)}{f_1(x)[1+xI(x)]^{2}},
\end{equation}
where $I(x)$ and $(x_0,x_1)$ are found from the condition given by Theorem \ref{lm:Iterm_cnt}.
\end{corollary}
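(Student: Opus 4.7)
The proof plan is essentially a direct substitution of the optimality condition from Theorem~\ref{lm:Iterm_cnt} into the secrecy rate expression of Lemma~\ref{lm:srate2_cnt}, followed by a change of the integration variable from $x$ to $I(x)$.

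First, I would start from the alternative form established in Lemma~\ref{lm:srate2_cnt},
\begin{equation*}
R_s = \int_{0}^{\infty}\left[1-F_1(x)\right]\rho(x)\left[\int_{0}^{x}\frac{F_2(y)dy}{[1+yI(x)]^{2}}\right]dx,
\end{equation*}
and replace the inner integral in square brackets by its value at the optimum given by (\ref{i_cnt}), namely $\left[1-F_1(x)\right]F_2(x)/\{f_1(x)[1+xI(x)]^2\}$. This substitution yields
\begin{equation*}
R_s = \int_{0}^{\infty}\frac{\left[1-F_1(x)\right]^2 F_2(x)\,\rho(x)}{f_1(x)\left[1+xI(x)\right]^{2}}\,dx.
\end{equation*}

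Next, I would invoke the definition $\rho(x)=-dI(x)/dx$, which allows rewriting $\rho(x)\,dx = -dI(x)$ inside the integral. The integrand then takes exactly the form appearing in (\ref{srate_cnt3}).

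Finally, I would restrict the range of integration from $(0,\infty)$ down to $(x_0,x_1)$. By the construction in Theorem~\ref{lm:Iterm_cnt}, the optimal interference function satisfies $I(x)=P$ for $x<x_0$ and $I(x)=0$ for $x\ge x_1$, so $\rho(x)=-dI(x)/dx$ vanishes on the complement of $[x_0,x_1]$ (in the absolutely continuous part; any jump at the endpoints can be absorbed by redefining $I$ continuously using the conditions $I(x_0)=P$ and $I(x_1)=0$). Hence only the interval $[x_0,x_1]$ contributes to the integral, producing exactly the claimed expression.

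The step requiring any care is the last one: one must verify that outside $[x_0,x_1]$ the optimal $I(x)$ is indeed constant (so $\rho$ vanishes), and that the endpoint conditions $I(x_0)=P$ and $I(x_1)=0$ make the restricted integral equal to the one on $(0,\infty)$ without introducing boundary terms. Everything else is a straightforward algebraic substitution using Theorem~\ref{lm:Iterm_cnt} and the definition $\rho(x)=-I'(x)$, so I do not anticipate any genuine technical obstacle beyond this bookkeeping.
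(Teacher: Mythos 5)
Your proposal is correct and follows essentially the same route as the paper, which simply combines Lemma~\ref{lm:srate2_cnt} with the optimality condition (\ref{i_cnt}) of Theorem~\ref{lm:Iterm_cnt} and uses $\rho(x)\,dx=-dI(x)$ to restrict the integral to $[x_0,x_1]$. Your added care about $\rho$ vanishing outside $[x_0,x_1]$ is the right bookkeeping and is implicit in the paper's one-line argument.
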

\begin{proof}
The proof is straightforward by combining Lemma \ref{lm:srate2_cnt} and Theorem \ref{lm:Iterm_cnt}.
\end{proof}

\section{A Rayleigh Fading Channel}\label{layer:Rayleigh}

In this section, we assume Rayleigh fading for both Alive-Bob and Alice-Eve channels. The fading gains $h_t$ are exponentially distributed with means $\lambda_t$ for $t=1,2$. That is, the PDFs of the fading gain $h_t$ are \begin{equation}\label{pdf_cnt}
    f_t(s) = \left\{ \begin{array}{ll}
 \frac{1}{\lambda_t}\exp\left(-\frac{s}{\lambda_t}\right) &\mbox{ if $s \geq 0$,} \\
 0  &\mbox{otherwise,}
       \end{array} \right.
\end{equation}
for $t=1,2$ and the CDFs are
\begin{equation}\label{cdf_cnt}
    F_t(s) = \left\{ \begin{array}{ll}
 1-\exp\left(-\frac{s}{\lambda_t}\right) &\mbox{ if $s \geq 0$,} \\
 0  &\mbox{otherwise.}
       \end{array} \right.
\end{equation}

\subsection{Single-Level-Coding Approach}

For comparison, we first calculate the secrecy key rate when single-level coding is used. As shown in Appendix \ref{proof_onelevel_term1}, the secrecy rate is
\begin{align}\label{rate_l1}
    R_s^{[1]}=\max_{R^{[1]} \geq 0} &\quad \exp\left(-\frac{e^{R^{[1]}}-1}{\lambda_1 P}\right)\left\{R^{[1]} -\exp\left(\frac{1}{\lambda_2 P}\right) \left[E_i\left(\frac{e^{R^{[1]}}}{\lambda_2 P}\right)-E_i\left(\frac{1}{\lambda_2 P}\right)\right]\right\},
\end{align}
where $E_i(x)=\int_{x}^{\infty}[\exp(-t)/t]dt$ is the exponential integral function. It can be verified that the above function is concave with respect to $R^{[1]}$ and thus has a unique maximum, which can be searched numerically.

\subsection{Layered-Coding Approach}

According to (\ref{srate_cnt3}), the secrecy rate with layered coding under the optimal power control is computed  numerically by evaluating
\begin{equation*}\label{rate_linf}
    R_s = \lambda_1 \int_{x_0}^{x_1}\frac{\exp(-x/\lambda_1)\left[\exp(-x/\lambda_2)-1\right]}{[1+xI(x)]^2}dI(x),
\end{equation*}
where the optimal interference distribution $I(x)$ and boundary points $x_0$ and $x_1$ can be found according to Lemma \ref{lm:Iterm_cnt} as follows.

\textit{Interference Distribution $I(x)$}

As shown in Appendix \ref{proof_int_term1}, we have
\begin{equation}\label{int_term_ap}
   \int_{0}^{x}\frac{F_2(y)dy}{\left[1+yI(x)\right]^2}=\frac{\exp\left(-x/\lambda_2\right)-1}{I(x)\left[1+xI(x)\right]}+\frac{\exp\left({1}/{\lambda_2I(x)}\right)}{\lambda_2I^2(x)}\left[E_i\left(\frac{1}{\lambda_2I(x)}\right)-E_i\left(\frac{1+xI(x)}{\lambda_2I(x)}\right)\right].
\end{equation}
We also have
\begin{equation}\label{int_term2}
   \frac{\left[1-F_1(x)\right]F_2(x)}{f_1(x)\left[1+xI(x)\right]^2}=\frac{\lambda_1\left[1-\exp(-x/\lambda_2)\right]}{\left[1+xI(x)\right]^2}.
\end{equation}

Therefore, we can show after some steps of arrangements that $I(x)$ is found by solving
\begin{align}\label{Iequation}
    & E_i\left(\frac{1}{\lambda_2I(x)}\right)-E_i\left(\frac{1+xI(x)}{\lambda_2I(x)}\right) \\
    & = \frac{\lambda_2 I(x)[1+\lambda_1 I(x)]}{[1+xI(x)]^2}\left[\exp\left(-\frac{1}{\lambda_2I(x)}\right)-\exp\left(-\frac{1+xI(x)}{\lambda_2 I(x)}\right)\right].\notag
\end{align}

\textit{Boundary Points $x_0$ and $x_1$}

We needs to find the boundary points $x_0$ and $x_1$ to meet the constraints that
\begin{equation*}
I(x_0)=P \mbox{~~and~~} I(x_1)=0.
\end{equation*}

By letting $I(x_0)=P$ in (\ref{Iequation}), we can solve the equation for $x_0$. However, $x_1$ cannot be solved by this means since we cannot let $I(x_1)=0$ in (\ref{Iequation}). Instead, we let $I(x_1)=0$ in (\ref{i_cnt}) and find that
\begin{equation*}
\int_{0}^{x_1}F_2(y)dy = x_1 + \lambda_2\left[\exp(-x_1/\lambda_2)-1\right],
\end{equation*}
and
\begin{equation*}
\frac{\left[1-F_1(x_1)\right]F_2(x_1)}{f_1(x_1)}=\lambda_1\left[1-\exp\left(-x_1/\lambda_2\right)\right].
\end{equation*}
Therefore, $x_1$ can be found by solving the following equation:
\begin{equation*}
x_1+(\lambda_1+\lambda_2)\left[\exp\left(-\frac{x_1}{\lambda_2}\right)-1\right]=0.
\end{equation*}

Interestingly, $x_1$ depends only on the channel statistics (characterized by $\lambda_1$ and $\lambda_2$ for the Rayleigh fading channels) and not on the power constraint $P$. Note that no power will be allocated to a layer with its index higher than $x_1$ (however, it is possible that some layers lower than $x_1$ still have zero power allocation, as shown in the numerical example). Finally, we remark that every equation discussed in this section has a unique solution after excluding a trivial solution $0$.

\subsection{Numerical Examples}

Now we show some numerical examples on the achievable secrecy-key rates and the optimal power distribution $\rho(s)$. We consider the symmetric Rayleigh fading channel defined by (\ref{pdf_cnt}) with $\lambda_1=\lambda_2=1$.

Fig. \ref{fig:keyrate} compares the secrecy key rates achieved by the layered-coding and single-level-coding based schemes (both optimized). We also compare them with the secrecy rate when perfect and noncausal CSI of the Alice-Bob channel is available to Alice. In this case, Alice is able to adapt its transmission rate based on the CSI at each time slot. We still assume a short-term power constraint and thus Alice does not adapt power in contrast to the scheme given by \cite{Gopala:IT:08}. Without CSI at Alice, the secrecy key rate achieved by the layered-coding based scheme is significantly higher. This shows the benefit of the broadcast approach due to the introduction of self-interference in transmission.
\begin{figure*}
  \centering
  \includegraphics[width=0.8\linewidth]{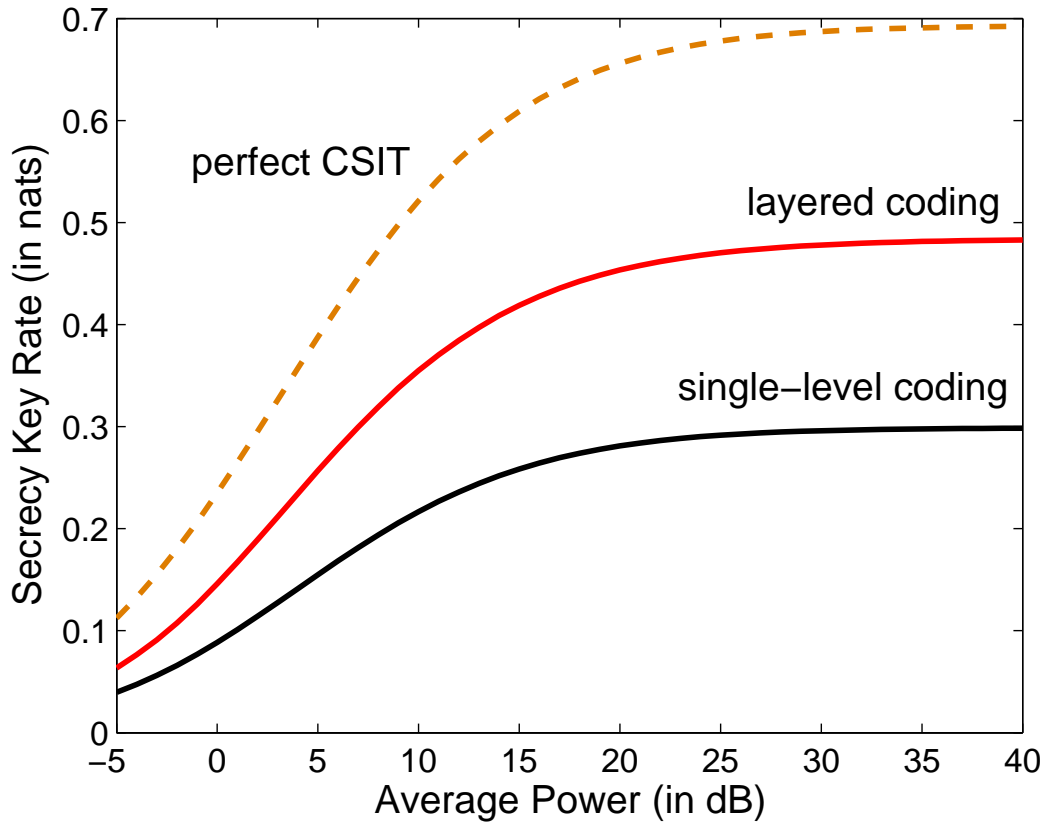}\\
  \caption{Secrecy key rates achievable for the layered-coding-based approach, the single-level-coding-based approach, and when perfect CSIT is available at Alice noncausually.}\label{fig:keyrate}
\end{figure*}

Fig. \ref{fig:power} shows the optimal power distribution over coded layers. A trend is that more power is distributed to lower layers as the total transmit power $P$ becomes larger. In general, the optimal power distribution does not concentrate much on a certain layer (or a small set of layers), especially when $P$ is large. We also compare the optimal power distribution for maximizing the secrecy key rate in key-generation and that for maximizing the average reliably decodable rate at Bob in non-secret transmission. With different power constraints, the power distributions for non-secret transmission are on the same curve but have different boundary points, which is different from the case for key generation. Also, when the total transmit power exceeds a certain threshold, the power distribution for key generation is more concentrated over higher layers (as shown for the cases of $P=5$ and $P=20$); while the opposite can be observed when $P$ is small (as shown for the case of $P=1$ in Fig. \ref{fig:power}.)

\begin{figure*}
  \centering
  \includegraphics[width=0.8\linewidth]{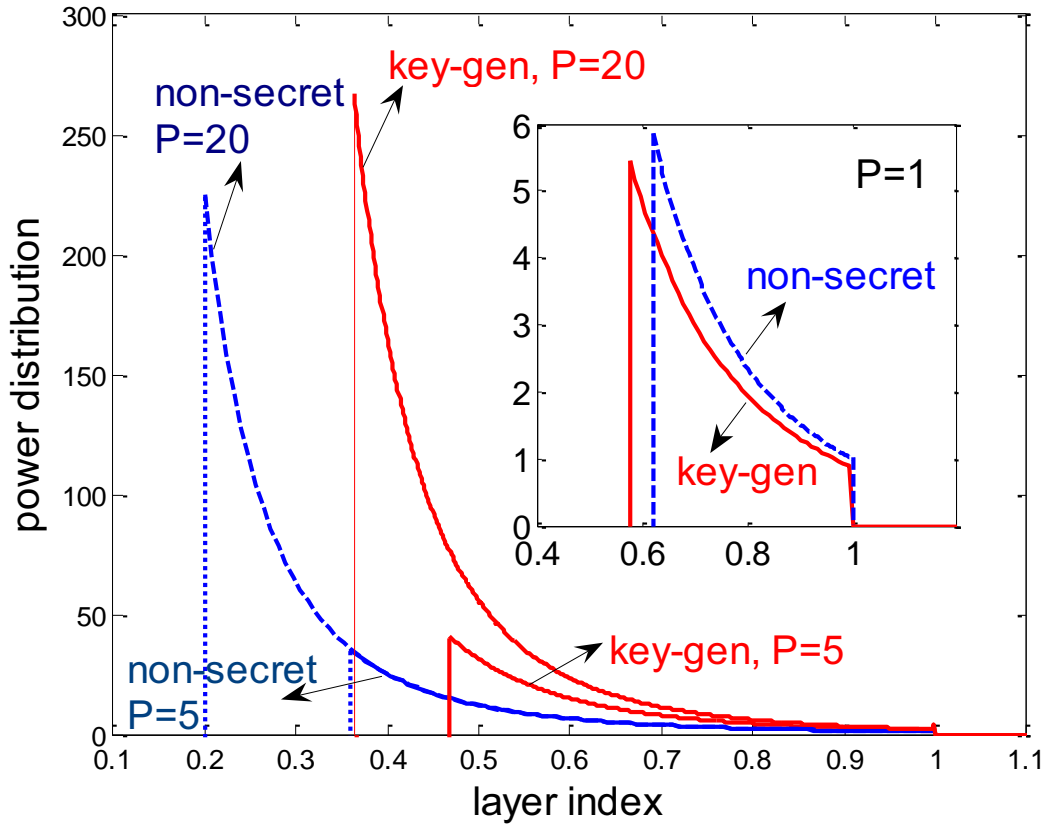}\\
  \caption{Optimal power distributions for maximizing the secrecy key rate in key-generation (``key-gen") and for maximizing the average reliably decodable rate at Bob in non-secret transmission (``non-secret") when the normalized transmit power is $P=1, 5, 20$. }\label{fig:power}
\end{figure*}

\section{Conclusions}
\label{layer:conclusions}

In this paper, we have introduced a broadcast approach for secret-key generation over slow-fading channels based on layered broadcast coding. We have considered a model in which Alice attempts to share a key with Bob while keeping the key secret from Eve. Both Alice-Bob and Alice-Eve channels are assumed to undergo slow fading, and perfect CSI is assumed to be known only at the receivers during the transmission. Layered coding facilitates adapting the reliably decoded rate at Bob to the actual channel state without CSI available at Alice. The index of a reliably decoded layer is sent back to Alice via an authenticated, public and error-free channel, which is exploited by Alice and Bob to generate the secret key. We have derived the achievable secrecy key rate and characterized the optimal power distribution over coded layers. Our theoretical and numerical results have shown that the broadcast approach outperforms the single-level-coding based approach significantly, which establishes the important role of introducing self-interference in facilitating secret-key generation over slow-fading channels when transmit CSI is not available.

\appendices

\section{Proof of Theorem ~\ref{th:srate_cnt}}\label{layer-achievability}

Let us first consider the $L$-state fading wiretap channel defined by Definition \ref{def:L-state}. We have the following result.
\begin{lemma-A1}
For the $L$-state fading wiretap channel defined by Definition \ref{def:L-state}, the following key-rate is achievable:
\begin{equation}\label{Rs_Lstate}
R_s=\sum_{l_1}\sum_{l_2<l_1}\Pr\left(h_1=h^{[l_1]},h_2=h^{[l_2]}\right)\sum_{l=l_{2}+1}^{l_{1}}\left[r^{[l]} - \log\left(1+\frac{h^{[l_2]}p^{[l]}}{1+h^{[l_2]}\sum_{i=l+1}^{L}p^{[i]}}\right) \right],
\end{equation}
where we assume that $\{h^{[1]} \leq h^{[2]} \leq \dots \leq h^{[L]}\}$ and $r^{[l]}$ is given by
\begin{equation}\label{ll-rate2}
r^{[l]}=\log\left(1+\frac{h^{[l]}p^{[l]}}{1+h^{[l]}\sum_{i=l+1}^{L}p^{[i]}}\right).
\end{equation}
\end{lemma-A1}

\begin{proof}
We relegate the proof of Lemma A.1 to Appendix \ref{prf:lm-a1}.
\end{proof}

It is easy to observe that the result given by Theorem \ref{th:srate_cnt} is a continuous version of Lemma A.1 (as $L \rightarrow \infty$), and can be shown by following some standard steps in a straightforward manner. We omit these steps and next prove Lemma A.1 only.

\section{Proof of Lemma A.1}\label{prf:lm-a1}

\subsection{Secret-Key Generation For The $L$-State Fading Wiretap Channel}

The key-generation scheme for the $L$-state fading wiretap channel is similar to the scheme outlined in Section \ref{layer:scheme}. The encoding and decoding in the communication phase have been discussed in Section \ref{layer:layered:glc}. To proceed with the key generation phase, we will use the following notation (some of which has been explained previously but is repeated here for ease of reference).

Let $W_m = W_m^{[1:L]}$ represent the set of messages sent by Alice at the $m$-th time slot and $W_m^{[l]}$ represents the message sent at the $l$-th layer. At Bob, the reliably decoded message set at the $m$-th time slot is denoted by $W_m^{[\Dd_1]}$ and the undecodable message set is denoted by $W_m^{[\Du_1]}$.  At Eve, similarly, the reliably decoded message set is denoted by $W_m^{[\Dd_2]}$ and the undecodable message set is $W_m^{[\Du_2]}$.
We use $\Wv = (W_1, W_2, \dots, W_M)$ to represent the set of messages sent over all $M$ time slots. Similarly, $\Wv^{[\Dd_t]} = (W_1^{[\Dd_t]}, W_2^{[\Dd_t]}, \dots, W_M^{[\Dd_t]})$ and $\Wv^{[\Du_t]} = (W_1^{[\Du_t]}, W_2^{[\Du_t]}, \dots, W_M^{[\Du_t]})$ are defined for $t=1,2$.

We use $\Xv_m = \Xv_m^{[l:L]}$ to represent the set of codewords sent in the $m$-th time slot, $\Xv_m^{[\Dd_t]}$ and $\Xv_m^{[\Du_t]}$ (for $t=1,2$) to represent the sets of reliably decoded, and undecodable layers, respectively. Furthermore, $\Xv = (\Xv_1, \Xv_2, \dots, \Xv_M)$, $\Xv^{[\Dd_t]} = (\Xv_1^{[\Dd_t]}, \Xv_2^{[\Dd_t]}, \dots, \Xv_M^{[\Dd_t]})$ and $\Xv^{[\Du_t]} = (\Xv_1^{[\Du_t]}, \Xv_2^{[\Du_t]}, \dots, \Xv_M^{[\Du_t]})$ are the set, reliably decoded set, and undecodable set of codewords, respectively, over all $M$ time slots. In addition, $\Yv_1 = (\Yv_{1,1}, \Yv_{1,2}, \dots, \Yv_{1,M})$ and $\Yv_2 = (\Yv_{2,1}, \Yv_{2,2}, \dots, \Yv_{2,M})$ are the signals observed by Bob and Eve, respectively, over all $M$ time slots.

In the key generation phase, two parameters of the key generation codebook are $R$ and $R_s$. For the $L$-state fading wiretap channel, $R_s$ is given by (\ref{Rs_Lstate}) and $R$ is given by
\begin{equation}\label{R_Lstate}
R=\sum_{l=1}^{L}\Pr\left(h_1=h^{[l]}\right)\left(\sum_{i=1}^{l}r^{[i]}\right),
\end{equation}
where $r^{[i]}$ is given by (\ref{ll-rate2}).

\subsection{Genie-Leaked Information}

In the communication phase, we assume that the message conveyed by each layer is chosen independently of those at all other layers and uniformly at random. That is, at time slot $m$, the message $W_m^{[l]}$ sent by the $l$-th layer, is randomly and uniformly selected from $\{1, 2, \dots, 2^{Nr^{[l]}}\}$.  One can always assume that the random message is generated through a two-step procedure: first, two messages $\Wh_m^{[l]}$ and $\Wt_m^{[l]}$ are selected randomly and independently, where $\Wh_m^{[l]} \in \{1, \dots, 2^{N\rh_m^{[l]}}\}$ and $\Wt_m^{[l]} \in \{1, \dots, 2^{N\rt_m^{[l]})}\}$, where $\rt_m^{[l]}=r^{[l]}-\rh_m^{[l]}$; Then, message $W_m^{[l]}= \Wh_m^{[l]} \times \Wt_m^{[l]}$ is formed.

Note that this procedure is assumed only for facilitating the proof and is not actually required for encoding. In fact, $\rh_m^{[l]}$ can be any value as long as $0 \leq \rh_m^{[l]} \leq r^{[l]}$. For example, we can assume the following value for $\rh_m^{[l]}$:
\begin{equation}\label{ri1}
    \rh_m^{[l]} = \left\{ \begin{array}{ll}
 r^{[l]} &\mbox{ if $1 \leq l \leq l_{1m}$ (i.e., $l \in \Dd_{1m})$,} \\
 \min\left\{r^{[l]},\log\left(1+\frac{h_{2m}p^{[l]}}{1+h_{2m}\sum_{i=l+1}^{L}p^{[i]}}\right)\right\}  &\mbox{otherwise,}
       \end{array} \right.
\end{equation}
where $l_{1m}$ is the feedback layer index (i.e., the highest index of the decodable layers at Bob) in time slot $m$. Again, the feedback and channel information are not needed during the transmission since the two-step procedure is not actually executed.

Following the partitioning of messages, we have $\Wh_m^{[\Dd_1]} = W_m^{[\Dd_1]}$, $\Wt_m^{[\Dd_1]} = \emptyset$, and $W_m^{[\Du_1]}=\Wh_m^{[\Du_1]} \times \Wt_m^{[\Du_1]}$. Hence, $W_m$ is decomposed as $W_m = W_m^{[\Dd_1]} \times \Wh_m^{[\Du_1]} \times \Wt_m^{[\Du_1]}$. By letting $\Whv^{[\Du_1]} = (\Wh_1^{[\Du_1]}, \Wh_2^{[\Du_1]}, \dots, \Wh_M^{[\Du_1]})$ and $\Wtv^{[\Du_1]} = (\Wt_1^{[\Du_1]}, \Wt_2^{[\Du_1]}, \dots, \Wt_M^{[\Du_1]})$, we have $\Wv = \Wv^{[\Dd_1]} \times \Whv^{[\Du_1]} \times \Wtv^{[\Du_1]}$ correspondingly.

We assume that there is a genie who gives the message set $\Wtv^{[\Du_1]}$ to Eve. This is a useful step to enable us to give a bound on the equivocation rate with respect to the key $K$ at Eve.

One might wonder if this genie-leaked information benefits Eve and eventually reduces the achievable key rate. In Fig. \ref{fig:genieleak}, we illustrate that the genie-leaked information does not benefit Eve.
Here, let us consider a special $L$-state fading wiretap channel for which $L=3$ and the support of both Alice-Bob and Alice-Eve channel gains is $\{h^{[1]},h^{[2]}, h^{[3]}\}$. It is easy to see that $\Wt_m^{[\Du_1]} \neq \emptyset$ if and only if $h_{1m}=h^{[2]}$ and $h_{2m}=h^{[1]}$ for a time slot $m$. Therefore, we can focus on such a time slot.
We have $\Dd_{1m}= \{1,2\}$, $\Du_{1m}= \{3\}$, $\Dd_{2m}= \{1\}$, and $\Du_{2m}= \{2,3\}$.

$\Xv_m^{[1]}$ is decoded and subtracted by both Alice and Bob from their received signals. Therefore, we consider only $\Xv_m^{[2]}$ and $\Xv_m^{[3]}$, where $\Xv_m^{[2]}$ contributes to the key generation and Eve tries to deduce information on $\Xv_m^{[2]}$, while $\Xv_m^{[3]}$ plays the role of interference. Fig. \ref{fig:genieleak} shows the rate of information that Eve can deduce on $\Xv_m^{[2]}$ versus the rate of interference codebook. (The rate region resembles that of a multiple access channel. Some related discussion can be found in \cite{Tang:arXiv:09A}.)

Eve uses the genie-leaked information to reduce the rate of interference codebook. To achieve this, Eve uses  $\Wt_m^{[3]}$ to obtain a thinned codebook $\Cc^{[3]}(\Wt_m^{[3]})$. That is, among all the codewords in the original codebook $\Cc^{[3]}$, i.e. only the ones corresponding to $\Wt_m^{[3]}$ are kept and the rest are eliminated. However, if the side information is given properly, Eve does not benefit from the genie. As shown in Fig. \ref{fig:genieleak}, the side information does not help Eve's eavesdropping if
\begin{equation*}
    \rt_m^{[3]} \leq r^{[3]}-\log\left(1+h_{2m}p^{[3]}\right).
\end{equation*}
Under this condition, the pair of coding rates of $\Cc^{[2]}$ and $\Cc^{[3]}(\Wt_m^{[3]})$ is represented by any point on the line segment from A to B. A reward of
\begin{equation*}
    \Delta_m = r^{[2]} - \log\left(1+\frac{h_{2m}p^{[2]}}{1+h_{2m}p^{[3]}}\right)
\end{equation*}
is collected from time slot $m$ in contributing to the key generation.

\begin{figure}
  \centering
  \includegraphics[width=0.7\linewidth]{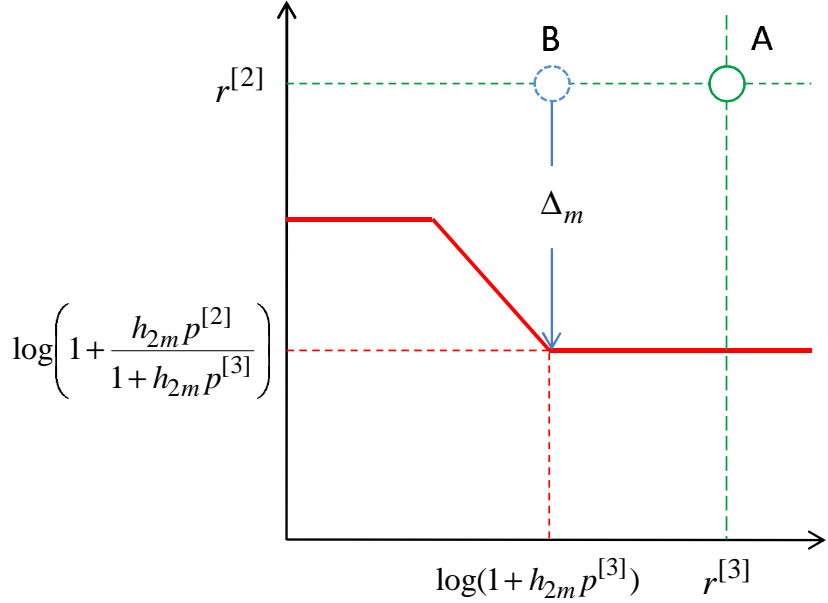}\\
  \caption{An illustrative example to show that the genie-leaked information does not benefit Eve.}\label{fig:genieleak}
\end{figure}

\subsection{Equivocation Calculation}

Now, we are ready to compute the equivocation rate with respect to the key $K$ at Eve:
\begin{align}
    &H(K|\Yv_2, \Psiv, \hv_2) \notag\\
    & \geq H(K|\Yv_2, \Psiv, \hv_1, \hv_2, \Wtv^{[\Du_1]} ) \label{erate_t1}\\
    & = H(K|\Yv_2, \Wtv^{[\Du_1]},\hv_1, \hv_2) \label{erate_t2}\\
    & = H(K, \Yv_2, \Xv|\Wtv^{[\Du_1]}, \hv_1, \hv_2) - H(\Yv_2|\Wtv^{[\Du_1]}, \hv_1, \hv_2)
     - H(\Xv|\Yv_2, K, \Wtv^{[\Du_1]}, \hv_1, \hv_2) \notag\\
    & \geq H(\Xv|\Yv_2, \Wtv^{[\Du_1]}, \hv_1, \hv_2)  - H(\Xv|\Yv_2, K, \Wtv^{[\Du_1]}, \hv_1, \hv_2), \label{erate_t3}
\end{align}
where (\ref{erate_t1}) is from the property that conditioning reduces entropy, (\ref{erate_t2}) is due to the fact that $\Psiv$ is a deterministic function of $\hv_1$ and $\Yv_2$.

As shown in Appendix \ref{pf_ET1} and \ref{pf_ET2}, the two terms in (\ref{erate_t3}) can be bounded as in the following,
\begin{equation}\label{ET1}
    H(\Xv|\Yv_2, \Wtv^{[\Du_1]}, \hv_1, \hv_2) \geq n(R_s-\delta_{N,M}),
\end{equation}
and
\begin{equation}\label{ET2}
    H(\Xv|\Yv_2, K, \Wtv^{[\Du_1]}, \hv_1, \hv_2) \leq n\delta'_{N,M},
\end{equation}
where $\delta_{N,M}, \delta'_{N,M} \rightarrow 0$ when $N, M \rightarrow \infty$.

By combining (\ref{erate_t3}), (\ref{ET1}) and (\ref{ET2}), we have
\begin{equation}\label{eqrate}
nR_e = H(K|\Yv_2, \Psiv, \hv_2) \geq n(R_s-\delta),
\end{equation}
which gives the perfect secrecy requirement that is
\begin{equation*}
R_e \geq R_s -\delta,
\end{equation*}
where $\delta \rightarrow 0$ as $n \rightarrow \infty$ (actually $N, M \rightarrow \infty$). Hence, we complete the proof.

\section{Proof of (\ref{ET1})}
\label{pf_ET1}

First, let us denote
\begin{equation*}
   E_1 \triangleq  H(\Xv|\Yv_2, \Wtv^{[\Du_1]}, \hv_1, \hv_2).
\end{equation*}
Due to independent coding at each time slot during forward transmission, we have
\begin{align}
    E_1 & = H(\Xv, \Yv_2, \Wtv^{[\Du_1]}, \hv_1, \hv_2) -  H(\Yv_2, \Wtv^{[\Du_1]}, \hv_1, \hv_2) \notag\\
    & = \sum_{m=1}^{M} H(\Xv_m, \Yv_{2m}, \Wt_m^{[\Du_1]}, h_{1m}, h_{2m}) - \sum_{m=1}^{M} H(\Yv_{2m}, \Wt_m^{[\Du_1]}, h_{1m}, h_{2m}) \notag\\
    & = \sum_{m=1}^{M} H(\Xv_m|\Yv_{2m}, \Wt_m^{[\Du_1]}, h_{1m}, h_{2m}). \notag
\end{align}
Furthermore, we have
\begin{align}
    E_1 &\geq \sum_{m \in \Mc^{+}}H(\Xv_m|\Yv_{2m}, \Wt_m^{[\Du_1]}, h_{1m}, h_{2m}) \label{erate_t5}\\
    & = \sum_{m \in \Mc^{+}} H(\Xv_m|\Wt_m^{[\Du_1]}, h_{1m}, h_{2m}) + H(\Yv_{2m} |\Xv_m, \Wt_m^{[\Du_1]}, h_{1m}, h_{2m}) - H(\Yv_{2m}|\Wt_m^{[\Du_1]}, h_{1m}, h_{2m})  \notag \\
    &= \sum_{m \in \Mc^{+}} H(\Xv_m|\Wt_m^{[\Du_1]}) + H(\Yv_{2m} |\Xv_m, h_{2m}) - H(\Yv_{2m}|\Wt_m^{[\Du_1]}, h_{1m}, h_{2m}) \label{erate_t5p1} \\
    &\geq \sum_{m \in \Mc^{+}} H(\Xv_m|\Wt_m^{[\Du_1]}) + H(\Yv_{2m} |\Xv_m, h_{2m}) - H(\Yv_{2m}|h_{2m})  \label{erate_t5p2} \\
    & \geq \sum_{m \in \Mc^{+}}H(\Xv_m|\Wt_m^{[\Du_1]}) -I(\Xv_m;\Yv_{2m}| h_{2m}) \label{erate_t6}
\end{align}
where $\Mc^{+}=\{m | m \in \{1, \dots, M \}, h_{1m} \geq h_{2m}\}$ is the set of time slots in which Alice-Bob channel is better than Alice-Eve channel), (\ref{erate_t5}) follows from the property that entropy is non-negative, (\ref{erate_t5p1}) follows from the property that $\Wt_m^{[\Du_1]} \leftrightarrow \Xv_m \leftrightarrow \Yv_{2m}$ forms a Markov chain, and (\ref{erate_t5p2}) follows from the property that conditioning reduces entropy.

To bound (\ref{erate_t6}) further, we have
\begin{align}
    &H(\Xv_m|\Wt_m^{[\Du_1]}) \notag\\
    &= N\left(\sum_{l=1}^{L}\rh_m^{[l]}\right) \notag\\
    &=N\left[\sum_{l=1}^{l_{1m}}r^{[l]}+\sum_{l=l_{1m}+1}^{L}\log\left(1+\frac{h_{2m}p^{[l]}}{1+h_{2m}\sum_{i=l+1}^{L}p^{[i]}}\right)\right]\label{erate_t8}\\ &=N\left[\sum_{l=1}^{l_{1m}}r^{[l]}+\log\left(1+h_{2m}\sum_{l=l_{1m}+1}^{L}p^{[l]}\right)\right]\label{erate_t9}
\end{align}
where $l_{1m}$ denotes the index of the highest decodable layer at Bob in time slot $m$, and (\ref{erate_t8}) follows from (\ref{ri1}). We also have
\begin{align}
    &I\left(\Xv_m;\Yv_{2m}| h_{2m}\right) \notag\\
    &= I\left(\Xv_m^{[\Dd_2]},\Xv_m^{[\Du_2]};\Yv_{2m}|h_{2m}\right) \notag \\
    &= I\left(\Xv_m^{[\Dd_2]};\Yv_{2m}|h_{2m}\right)+I\left(\Xv_m^{[\Du_2]};\Yv_{2m}|\Xv_m^{[\Dd_2]},h_{2m}\right) \notag\\
    &\leq H(\Xv_m^{[\Dd_2]})+ I\left(\Xv_m^{[\Du_2]};\Yv_{2m}|\Xv_m^{[\Dd_2]},h_{2m}\right) \notag\\
    &\leq N\left[\sum_{l=1}^{l_{2m}}r^{[l]} + \log\left(1+h_{2m}\sum_{l=l_{2m}+1}^{L}p^{[l]}\right)+\delta_1\right], \label{erate_t10}
\end{align}
where $l_{2m}$ denotes the index of the highest decodable layer at Eve in time slot $m$, and $\delta_1 \rightarrow 0$ as $N \rightarrow \infty$.

Combining (\ref{erate_t6}), (\ref{erate_t9}), and (\ref{erate_t10}), we have
\begin{align}
    E_1 &\geq N\left\{\sum_{m \in \Mc^{+}}\left[\sum_{l=l_{2m}+1}^{l_{1m}}r^{[l]} - \log\left(1+\frac{h_{2m}\sum_{l=l_{2m}+1}^{l_{1m}}p^{[l]}}{1+h_{2m}\sum_{l=l_{1m}+1}^{L}p^{[l]}}\right) - \delta_1 \right]\right\}\notag\\
    &= N \sum_{l_1}\sum_{l_2<l_1}\#\left(h_1=h^{[l_1]},h_2=h^{[l_2]}\right)\left[\sum_{l=l_{2}+1}^{l_{1}}r^{[l]} - \log\left(1+\frac{h^{[l_2]}\sum_{l=l_{2}+1}^{l_{1}}p^{[l]}}{1+h^{[l_2]}\sum_{l=l_{1}+1}^{L}p^{[l]}}\right) - \delta_1 \right]\notag\\
    &= N \sum_{l_1}\sum_{l_2<l_1}\#\left(h_1=h^{[l_1]},h_2=h^{[l_2]}\right)\left\{\sum_{l=l_{2}+1}^{l_{1}}\left[r^{[l]} - \log\left(1+\frac{h^{[l_2]}p^{[l]}}{1+h^{[l_2]}\sum_{i=l+1}^{L}p^{[i]}}\right) \right] - \delta_1 \right\},\notag
\end{align}
where $\#\left(h_1=h^{[l_1]},h_2=h^{[l_2]}\right)$ denotes the number of time slots (out of $M$ slots) that $h_1=h^{[l_1]}$ and $h_2=h^{[l_2]}$.

When $M \rightarrow \infty$, we have
\begin{align}
 E_1 &\geq N\sum_{l_1}\sum_{l_2<l_1}M\left[\Pr\left(h_1=h^{[l_1]},h_2=h^{[l_2]}\right)-\delta'_1\right]\left\{\sum_{l=l_{2}+1}^{l_{1}}\left[r^{[l]} - \log\left(1+\frac{h^{[l_2]}p^{[l]}}{1+h^{[l_2]}\sum_{i=l+1}^{L}p^{[i]}}\right) \right] - \delta_1\right\}\notag\\
 &= n(R_s-\delta_2),\label{erate_t11}
\end{align}
where $\delta_2 \rightarrow 0$ when $N \rightarrow \infty$ and $M \rightarrow \infty$.

\section{Proof of (\ref{ET2})}
\label{pf_ET2}

First, we denote
\begin{equation*}
E_2 \triangleq H(\Xv|\Yv_2, K, \Wtv^{[\Du_1]}, \hv_1, \hv_2).
\end{equation*}

To give a bound on $E_2$, we consider Eve's decoding of $\Xv$, i.e., the codewords sent over all $L$ layers and $M$ time slots, by assuming that Eve observes $\Yv_2$ and $\hv_2$, and is given (by a genie) the side information $K$, $\Wtv^{[\Du_1]}$ and $\hv_1$. Note that $\Xv = \Xv^{[\Dd_1]} \cup \Xv^{[\Du_1]}$, where $\Xv^{[\Du_1]}$ plays the role of interference and is not used in the key generation. To bound $E_2$, however, we need Eve to decode the interference given the genie-aided side information.

Given $\hv_1$ and $\hv_2$, Eve is able to partition $\Xv$ as
\begin{equation}\label{layerspartition}
\Xv=\Xv_{\Mc^{+}} \cup \Xv_{\Mc^{-}},
\end{equation}
where $\Mc^{+}=\{m | m =1, \dots, M, \mbox{~and~} h_{1m} \geq h_{2m}\}$, $\Mc^{-}=\{1,\dots,M\}/\Mc^{+}$, $\Xv_{\Mc^{+}}=\{\Xv_m | m \in \Mc^{+}\}$, and $\Xv_{\Mc^{-}}=\{\Xv_m | m \in \Mc^{-}\}$.
We consider the decoding of $\Xv_{\Mc^{+}}$ and $\Xv_{\Mc^{-}}$ separately as in the following subsections.

\subsection{Decoding of $\Xv_{\Mc^{-}}$}

We note that $\Xv_{\Mc^{-}}$ can be partitioned as
\begin{equation}\label{layerset2}
\Xv_{\Mc^{-}}=\Xv_{\Mc^{-}}^{[\Dd_2]} \cup \Xv_{\Mc^{-}}^{[\Du_2]}.
\end{equation}

Based on $\Yv_{2m}$ and side information $\Wt_m^{[\Du_1]}$, Eve performs the decoding of $\Xv_{\Mc^{-}}$ for each time slot $m \not\in \Mc^{+}$ independently. The decoding is performed in two steps:

\subsubsection{Decoding of $\Xv_{\Mc^{-}}^{[\Dd_2]}$}

For each $m \in \Mc^{-}$, Eve decodes $\Xv_m^{[\Dd_2]}$ (decodable layers for Eve) directly based on $\Yv_{2m}$ without using side information.

\subsubsection{Decoding of $\Xv_{\Mc^{-}}^{[\Du_2]}$}

After subtracting $\Xv_m^{[\Dd_2]}$ decoded previously, Eve attempts the decoding of $\Xv_m^{[\Du_2]}$ using the side information $\Wt_m^{[\Du_1]}$. More specifically, considering the decoding of $\Xv_m^{[l]}$ for layer $l \in \Du_{2m}$, we use $\Wt_m^{[l]}$, which is available since we have $\Du_{2m} \subset \Du_{1m}$ and therefore $\Wt_m^{[l]} \in \Wt_m^{[\Du_1]}$.
We denote by $\Cc^{[l]}(\Wt_m^{[l]})$ the thinned codebook corresponding to the genie-informed message $\Wt_m^{[l]}$. The size of $\Cc^{[l]}(\Wt_m^{[l]})$ is $2^{N\rh_{m}^{[l]}}$, where $\rh_{m}^{[l]}$ is given by (\ref{ri1}). Eve attempts to decode $\Xv_m^{[l]}$ using $\Cc^{[l]}(\Wt_m^{[l]})$ after subtracting the layers lower than $l$, denoted by $\Xv_m^{[1:(l-1)]}$.
For any typical sequences $\Xv_m^{[l]}$ and $\Yv_{2m}$, it can be shown that
\begin{equation*}
I\left(\Xv_m^{[l]};\Yv_{2m}|\Xv_m^{[1:(l-1)]}\right) \geq N \left[\log\left(1+\frac{h_{2m}p^{[l]}}{1+h_{2m}\sum_{i=l+1}^{L}p^{[i]}}\right)-\epsilon\right].
\end{equation*}
Hence, Eve is able to decode $\Xv_m^{[l]}$ with an arbitrarily small error probability when $N \rightarrow \infty$. By performing decoding for all $l \in \Du_{2m}$ successively, Eve decodes $\Xv_m^{[\Du_2]}$.

\subsection{Decoding of $\Xv_{\Mc^{+}}$}

We note that $\Xv_{\Mc^{+}}$ can be partitioned as
\begin{equation}\label{layerset1}
\Xv_{\Mc^{+}}=\Xv_{\Mc^{+}}^{[\Dd_2]} \cup \left(\Xv_{\Mc^{+}}^{[\Dd_1]} \cap \Xv_{\Mc^{+}}^{[\Du_2]}\right) \cup \Xv_{\Mc^{+}}^{[\Du_1]},
\end{equation}
and Eve performs the decoding of $\Xv_{\Mc^{+}}$ through the following three steps:

\subsubsection{Decoding of $\Xv_{\Mc^{+}}^{[\Dd_2]}$}

Eve decodes $\Xv_{\Mc^{+}}^{[\Dd_2]}$ directly based on $\Yv_2$ without using side information.

\subsubsection{Decoding of $\Xv_{\Mc^{+}}^{[\Dd_1]}\ \cap \Xv_{\Mc^{+}}^{[\Du_2]}$}

Eve decodes $\Xv_{\Mc^{+}}^{[\Dd_1]} \cap \Xv_{\Mc^{+}}^{[\Du_2]}$ jointly based on a list decoding argument, which is explained in details as in the following. A similar argument based on list decoding was given in \cite{Gopala:IT:08}.

\begin{figure*}
  \centering
  \includegraphics[width=0.7\linewidth]{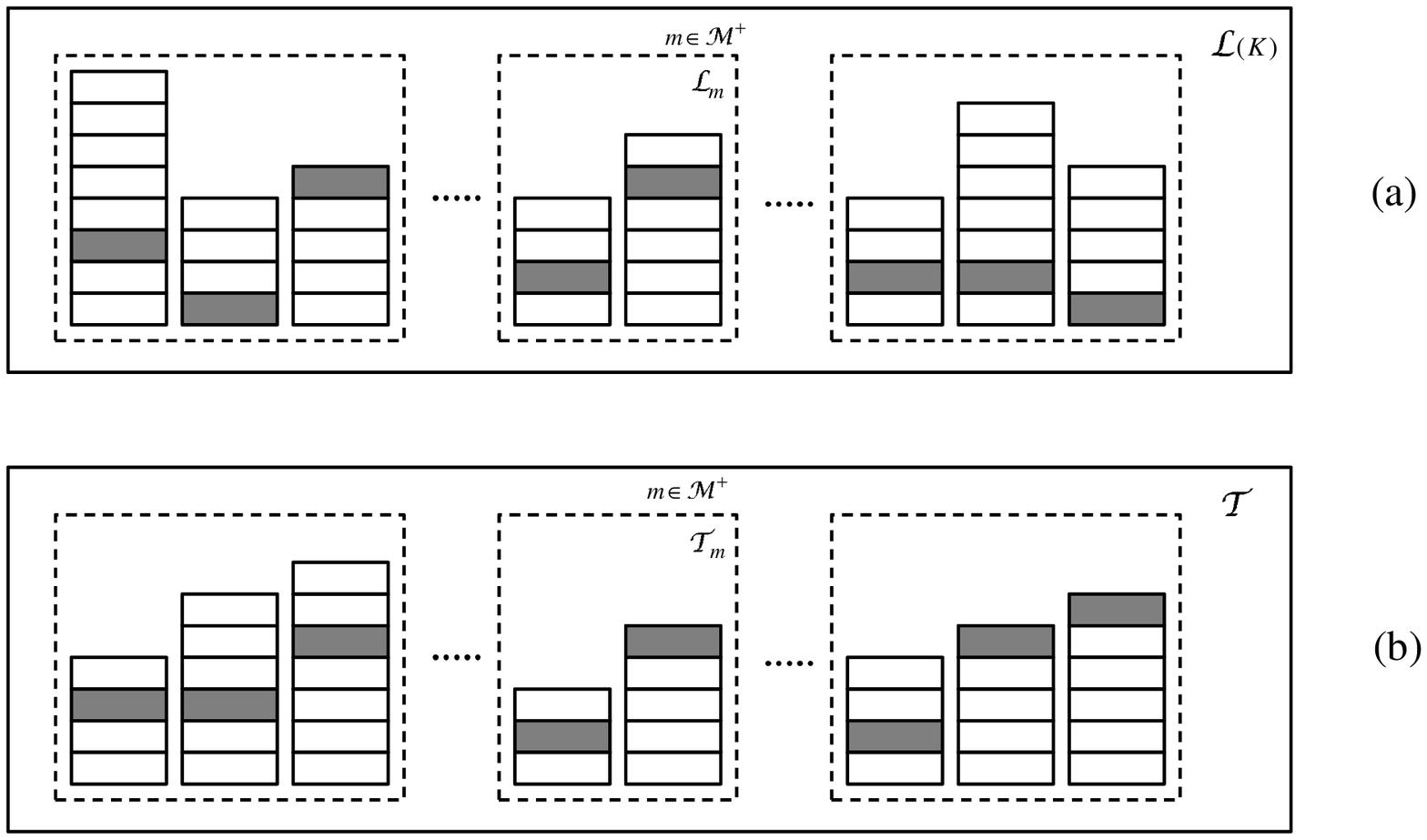}\\
  \caption{Two lists of super-sequences: (a) list $\Lc(K)$ constructed based on genie-provided $K$, ~(b) list $\Tc$ constructed based on joint-typicality.}\label{fig:list}
\end{figure*}

\begin{definition}
Sequence $\xlv_m$ is the concatenation of the codewords sent from the group of communication codebooks $\Cc^{[\Dd_{1m} \cap \Du_{2m}]}$ (i.e. $\xlv_m = \left[\xv_m^{[l_{2m}+1]}, \dots, \xv_m^{[l_{1m}]}\right]$). The concatenation of sequences $\xlv_m$ for all $m \in \Mc^{+}$ is called a super-sequence, denoted by $\xlv$.
\end{definition}

The length of sequence $\xlv_m$ is $N(l_{1m}-l_{2m})$, and the length of super-sequence $\xlv$ is therefore $N\sum_{m \in \Mc^{+}}(l_{1m}-l_{2m})$. Therefore, the length of a super-sequence depends on the channel realizations of $\hv_1$ and $\hv_2$ for a finite $M$. However, as $M \rightarrow \infty$, it can be seen that the length does not depend on the channel realizations.

As shown in Fig. \ref{fig:list}, Eve generates two lists of such super-sequences $\Lc$ and $\Tc$ based on genie-provided secret key $K$ and joint-typicality, respectively.

First, given a secret key $K$, Eve narrows down to bin $\mathcal{B}(K)$ in the key generation codebook. Since the mapping function $g$ is deterministic (one-to-one) and encoding in the communication phase is also deterministic, Eve is able to generate $\Lc(K)$, a list of super-sequences each of which corresponds to a codeword in bin $\mathcal{B}(K)$.
Hence, the size of $\Lc(K)$ is $\|\Lc(K)\|=2^{nR_s}$.

For each $m \in \Mc^{+}$ and any possible sequence $\xlv_m$, we define
\begin{equation*}
        \gamma(\xlv_m, \Yv_{2m}) = \left\{ \begin{array}{ll}
1 &\mbox{ if $(\Xv_m^{[\Dd_1 \cap \Du_2]} , \Yv_{2m})$ are jointly typical when $\Xv_{m}^{[\Dd_2]}$ are decoded and substraced from $\Yv_{2m}$.} \\
 0  &\mbox{otherwise}.
       \end{array} \right.
\end{equation*}
Eve constructs a list $\Lc_m$ such that
\begin{equation}\label{Lm}
    \Tc_m = \left\{\xlv_m  | \gamma(\xlv_m, \Yv_{2m}) = 1 \right\}.
\end{equation}
That is, $\Tc_m$ consists of the sequences such that the corresponding codewords coming from codebooks $\Cc^{[\Dd_{1m} \cap \Du_{2m}]}$ are jointly typical with $\Yv_{2m}$ given that $\Xv_{m}^{[\Dd_2]}$ has already been decoded and canceled.
Finally, Eve constructs a list $\Tc$ by concatenating sequences in $\Tc_m$ for all $m \in \Mc^{+}$.

Suppose that $\xlv$ is the super-sequence corresponding to the transmitted codewords $\Xv_{\Mc^{+}}^{[\Dd_1]}\ \cap \Xv_{\Mc^{+}}^{[\Du_2]}$. Given the two lists $\Lc(K)$ and $\Tc$, Eve attempts to find $\xlv$. Eve declares that $\xlv$ were sent, if $\xlv$ is the only common super-sequence in both $\Lc(K)$ and $\Tc$. She declares an error if there is no super-sequence or more than one super-sequences in $\Lc(K) \cap \Tc$.  Hence, there are two error events correspondingly,
\begin{enumerate}
  \item [$\Ee_{1}$]: $\xlv \not \in \Lc(K) \cap \Tc$,
  \item [$\Ee_{2}$]: there exists $\xlt \neq \xlv$, and $\xlt \in \Lc(K) \cap \Tc$.
\end{enumerate}
The Asymptotic Equipartition Property (AEP) implies that $\Pr(\Ee_{1}) \leq \epsilon_1$, where $\epsilon_1 \rightarrow 0$ as $n \rightarrow \infty$. $\Pr(\Ee_2)$ is bounded as the follows:
\begin{align}
    \Pr\left(\Ee_2\right) & \leq \E\left\{\sum_{\xlt \in \Tc, \xlt \neq \xlv}~\Pr\left(\xlt \in  \Lc(K)\right)\right\}\notag\\
    & \leq \E\left\{\|\Lc \|2^{-nR_s}\right\}, \label{evedecode_t1}
\end{align}
where $\|\Lc\|$ represents the size of the list $\Lc$, and (\ref{evedecode_t1}) follows from the uniform distribution of super-sequences in $\Lc(K)$.

To proceed, we need to give a bound on $\|\Lc\|$. We denote the size of $\Lc_m$ to be $\|\Lc_m\|$. For any $m \in \Mc^{+}$, $\|\Lc_m\|$ can be bounded as the follows:
\begin{align}
     \|\Lc_m\| & = \E\left\{\sum_{\xlv_m} \gamma(\xlv_m, \Yv_{2m})\right\} \notag\\
    & \leq 1+ \sum_{\xlt_{m} \neq \xlv_{m}}\E\left\{ \gamma(\xlv_m, \Yv_{2m})\right\}, \notag \\
    & \leq 1+ 2^{N\left(\sum_{l=l_{2m}+1}^{l_{1m}}r^{[l]}\right)}2^{N\left[-  \log\left(1+\frac{h_{2m}\sum_{l=l_{2m}+1}^{l_{1m}}p^{[l]}}{1+h_{2m}\sum_{l=l_{1m}+1}^{L}p^{[l]}}\right)+\epsilon_2\right]} \notag\\
   & \leq 2^{N\left\{\sum_{l=l_{2m}+1}^{l_{1m}}\left[r^{[l]}- \log\left(1+\frac{h_{2m}p^{[l]}}{1+h_{2m}\sum_{i=l+1}^{L}p^{[i]}}\right)\right]+ \epsilon_3\right\}},\notag
\end{align}
where $\epsilon_2, \epsilon_3 \rightarrow 0$ as $N \rightarrow \infty$.
The size of $\Lc$ is then bounded as
\begin{align}
    \|\Lc\| & = \prod_{m \in \Mc^{+}} \|\Lc_m\| \notag \\
    &\leq 2^{N\sum_{m \in \Mc^{+}}\left\{\sum_{l=l_{2m}+1}^{l_{1m}}\left[r^{[l]}- \log\left(1+\frac{h_{2m}p^{[l]}}{1+h_{2m}\sum_{i=l+1}^{L}p^{[i]}}\right)\right]+ \epsilon_3\right\}}.\notag
\end{align}
As $M \rightarrow \infty$, by following steps similar as those for deriving (\ref{erate_t11}), we have
\begin{align}
    \|\Lc\|  \leq 2^{n(R_s-\epsilon_4)}, \label{evedecode_t2}
\end{align}
Now we can combine (\ref{evedecode_t1}) and (\ref{evedecode_t2}) to obtain that
\begin{equation}\label{evedecode_t3}
 \Pr\left(\Ee_2\right) \leq 2^{-n\epsilon_4} \rightarrow 0,
\end{equation}
as $n \rightarrow \infty$. Hence, the average error probability for decoding $\Xv_{\Mc^{+}}^{[\Dd_1]} \cap \Xv_{\Mc^{+}}^{[\Du_2]}$ is bounded by
\begin{equation*}
    \Pr(\Ee_1 \cup \Ee_2) \leq \Pr(\Ee_1)+ \Pr(\Ee_2) \rightarrow 0,
\end{equation*}
as $n \rightarrow \infty$. Thus, Eve is able to find the right super-sequence $\xlv$ with a vanishing error probability. Since $\xlv$ and the group of codewords $\Xv_{\Mc^{+}}^{[\Dd_1]} \cap \Xv_{\Mc^{+}}^{[\Du_2]}$ are related by a one-to-one mapping, we conclude that Eve is able to decode $\Xv_{\Mc^{+}}^{[\Dd_1]} \cap \Xv_{\Mc^{+}}^{[\Du_2]}$ with a vanishing error probability.

\subsubsection{Decoding of $\Xv_{\Mc^{+}}^{[\Du_1]}$}

Eve subtracts $\Xv_{\Mc^{+}}^{[\Dd_2]}$ and $\Xv_{\Mc^{+}}^{[\Dd_1]} \cap \Xv_{\Mc^{+}}^{[\Du_2]}$ from $\Yv_2$ based on the two previous decoding procedures, and tries to decode $\Xv_{m}^{[\Du_1]}$ using the thinned codebooks $\Cc^{[\Du_1]}(\Wtv^{[\Du_1]})$. The decoding procedure is similar to that discussed in subsection A.2.

Finally, we conclude that Eve is able to decode $\Xv$ given $\Yv_2$, the genie-informed (secret-key) information $K$, and the side information $\Wtv^{[\Du_1]}$. Hence, Fano's inequality implies that
\begin{equation}\label{erate_term2}
   E_2= H(\Xv|\Yv_2, W, \Wtv^{[\Du_1]}, \hv_1, \hv_2) \leq n\delta_n \rightarrow 0,
\end{equation}
as $n \rightarrow \infty$.
We thus complete the proof of (\ref{ET2}).

\section{Proof of Lemma \ref{lm:srate2_cnt}}\label{srate2}

We can rewrite the secrecy key rate $R_s$ as
\begin{equation}\label{T1-T2}
R_s  = T_1 -T_2,
  \end{equation}
where
\begin{align}
T_1  &= \int_{0}^{\infty} \underbrace{\int_{0}^{h_1}\left[\int_{h_2}^{h_1}\frac{s\rho(s)ds}{1+sI(s)}\right]d[1-F_2(h_2)]}_{T_{1i}(h_1)}d[1-F_1(h_1)],\label{T1}\\
\mbox{and} \qquad T_2  &= \int_{0}^{\infty} \underbrace{\int_{0}^{h_1}\left[\int_{h_2}^{h_1}\frac{h_2\rho(s)ds}{1+h_2I(s)}\right]d[1-F_2(h_2)]}_{T_{2i}(h_1)}d[1-F_1(h_1)].\label{T2}
  \end{align}

\subsection{Evaluation of $T_1$}

The under-braced term $T_{1i}(h_1)$ can be evaluated by integrating by part. We have
\begin{align}\label{proof_T1i}
    T_{1i}(h_1) & = \left.\left[\int_{h_2}^{h_1}\frac{s\rho(s)ds}{1+sI(s)}\right]\left[1-F_2(h_{2})\right]\right|_{0}^{h_1}-\int_{0}^{h_1}\left[1-F_2(h_{2})\right]d\left[\int_{h_2}^{h_1}\frac{u\rho(u)du}{1+uI(u)}\right]\notag\\
    & = -\int_{0}^{h_1} \frac{s\rho(s)ds}{1+sI(s)}+\int_{0}^{h_1}\left[1-F_2(s)\right]\frac{s\rho(s)ds}{1+sI(s)}\notag\\
    &=-\int_{0}^{h_1}F_2(s)\frac{s\rho(s)ds}{1+sI(s)}
  \end{align}
By another integrating by part, we obtain
\begin{align}\label{t1_eval}
    T_1 & = \int_{0}^{\infty} T_{1i}(h_1)d\left[1-F_1(h_1)\right]\notag\\
        & = \left.T_{1i}(h_1)\left[1-F_1(h_1)\right]\right|_{0}^{\infty}-\int_{0}^{\infty}\left[1-F_1(h_1)\right]d\left[T_{1i}(h_1)\right]\notag\\
        &= -\int_{0}^{\infty}\left[1-F_1(h_1)\right]d\left[T_{1i}(h_1)\right]\notag\\
        &= \int_{0}^{\infty}\left[1-F_1(s)\right]F_2(s)\frac{s\rho(s)ds}{1+sI(s)}.
\end{align}

\subsection{Evaluation of $T_2$}

The under-braced term $T_{2i}(h_1)$ can be rewritten as
\begin{align*}
    T_{2i}(h_1) & = \left.\left[\int_{h_2}^{h_1}\frac{h_2\rho(s)ds}{1+h_2I(s)}\right]\left[1-F_2(h_{2})\right]\right|_{0}^{h_1}-\int_{0}^{h_1}\left[1-F_2(h_{2})\right]d\left[\int_{h_2}^{h_1}\frac{h_2\rho(s)ds}{1+h_2I(s)}\right]\notag\\
    & = -\int_{0}^{h_1} \left[1-F_2(s)\right]d\left[\int_{h_2}^{h_1}\frac{h_2\rho(s)ds}{1+h_2I(s)}\right].
  \end{align*}
Notice that
\begin{equation*}
    \int_{h_2}^{h_1}\frac{h_2\rho(s)ds}{1+h_2I(s)}=\log\left(1+h_2I(h_2)\right)-\log\left(1+h_2I(h_1)\right),
\end{equation*}
and therefore
\begin{equation*}
    \frac{d}{dh_2}\left[\int_{h_2}^{h_1}\frac{h_2\rho(s)ds}{1+h_2I(s)}\right]=\frac{I(h_2)-h_2 \rho(h_2)}{1+h_2I(h_2)}-\frac{I(h_1)}{1+h_2I(h_1)}.
\end{equation*}
Hence, $T_{2i}(h_1)$ can be written as
\begin{equation}\label{proof_T2i}
     T_{2i}(h_1) = -\int_{0}^{h_1}\left[1-F_2(h_2)\right]\left[\frac{I(h_2)-h_2 \rho(h_2)}{1+h_2I(h_2)}-\frac{I(h_1)}{1+h_2I(h_1)}\right]dh_2.
\end{equation}
Furthermore, we have
\begin{equation}\label{proof_T2i_dh1}
     \frac{d}{dh_1}T_{2i}(h_1) = -\left[1-F_2(h_1)\right]\left[\frac{I(h_1)-h_1\rho(h_1)}{1+h_1I(h_1)}\right]+\frac{d}{dh_1}\left[\int_{0}^{h_1}\left[1-F_2(h_2)\right]\frac{I(h_1)}{1+h_2I(h_1)}dh_2\right].
\end{equation}\
To proceed, we need to interchange the operation of differentiation with respect to $h_1$ with the operation of integration over $h_2$, where the integral domain is also a function of $h_1$. We use the property that for any real differentiable function $p(x,y)$, we can write
\begin{equation}\label{partialdiff}
\frac{d}{dx}\int_{0}^{x}p(x,y)dy=p(x,x)+\int_{0}^{x}\frac{\partial p(x,y)}{\partial x}dy.
\end{equation}
In particular, we have
\begin{align}\label{dterm21}
    &\frac{d}{dh_1}\left[\int_{0}^{h_1}\left[1-F_2(h_2)\right]\frac{I(h_1)}{1+h_2I(h_1)}dh_2\right] \notag\\
    &= \left[1-F_2(h_1)\right]\frac{I(h_1)}{1+h_1I(h_1)}+\int_{0}^{h_1}\left[1-F_2(h_2)\right]\frac{\partial}{\partial h_1}\left[\frac{I(h_1)}{1+h_2I(h_1)}\right]dh_2 \notag\\
    &= \left[1-F_2(h_1)\right]\frac{I(h_1)}{1+h_1I(h_1)}+\frac{\rho(h_1)}{I(h_1)}\int_{0}^{h_1}\left[1-F_2(h_2)\right]d\frac{1}{1+h_2I(h_1)} \notag\\
    &=\left[1-F_2(h_1)\right]\frac{I(h_1)}{1+h_1I(h_1)}+ \frac{\rho(h_1)}{I(h_1)}\left[\frac{1-F_2(h_1)}{1+h_1I(h_1)}-1+\int_{0}^{h_1}\frac{f_2(h_2)dh_2}{1+h_2I(h_1)}\right],
\end{align}
where we have used integrating by part to get to the last equality.

Putting (\ref{dterm21}) into (\ref{proof_T2i_dh1}), we have
\begin{equation}\label{proof_T2i_dh1_v2}
     \frac{d}{dh_1}T_{2i}(h_1) = -\frac{F_2(h_1)\rho(h_1)}{I(h_1)}+\frac{\rho(h_1)}{I(h_1)}\int_{0}^{h_1}\frac{f_2(h_2)dh_2}{1+h_2I(h_1)}.
\end{equation}

Now, we can evaluate $T_2$ by
\begin{align}\label{t2_eval}
    T_2 & = \int_{0}^{\infty} T_{2i}(h_1)d\left[1-F_1(h_1)\right]\notag\\
        & = -\int_{0}^{\infty} \left[1-F_1(h_1)\right] dT_{2i}(h_1)\notag \\
        & = \int_{0}^{\infty}\frac{\left[1-F_1(h_1)\right]F_2(h_1)\rho(h_1)}{I(h_1)}dh_1 - \int_{0}^{\infty}\frac{\left[1-F_1(h_1)\right]\rho(h_1)}{I(h_1)}\left[\int_{0}^{h_1}\frac{f_2(h_2)dh_2}{1+h_2I(h_1)}\right]dh_1.
\end{align}

\subsection{Evaluation of $R_s=T_1-T_2$}

Using (\ref{t1_eval}) and (\ref{t2_eval}), and replacing the variable $h_1$ and $h_2$ with $x$ and $y$, respectively, we have
\begin{align}\label{Rs_v1}
     R_s &= \int_{0}^{\infty}\frac{\left[1-F_1(x)\right]\rho(x)}{I(x)}\left[\int_{0}^{x}\frac{f_2(y)dy}{1+yI(x)}-\frac{F_2(x)}{1+xI(x)}\right] \notag\\
     &=\int_{0}^{\infty}\left[1-F_1(x)\right]\rho(x)\left[\int_{0}^{x}\frac{F_2(y)dy}{\left[1+yI(x)\right]^2}\right],
\end{align}
which is (\ref{srate2_cnt}).

\section{Proof of Lemma \ref{lm:Iterm_cnt}}
\label{Iterm}

The functional of (\ref{srate2_cnt}) is defined by
\begin{equation*}
    L\left(x,I(x),I'(x)\right)=-\left[1-F_1(x)\right]I'(x)\left[\int_{0}^{x}\frac{F_2(y)dy}{[1+yI(x)]^{2}}\right].
\end{equation*}
A necessary condition for a maximum of the integral of $L(x,I(x),I'(x))$ over $x$ is a zero variation of the  functional. For characterizing the optimal $I(x)$, the E\"{u}ler-Lagrangian equation \cite{Gelfand:63} gives a necessary condition denoted by
\begin{equation}\label{euler2}
    \frac{\partial L}{\partial I}-\frac{d}{dx}\left(\frac{\partial L}{\partial I'}\right)=0,
\end{equation}
for which we have,
\begin{align}
    \frac{\partial L}{\partial I} &= 2\left[1-F_1(x)\right]I'(x)\int_{0}^{x}\frac{yF_2(y)dy}{[1+yI(x)]^3},\label{euler_tm1}\\
    \frac{\partial L}{\partial I'} &= -\left[1-F_1(x)\right]\int_{0}^{x}\frac{F_2(y)dy}{[1+yI(x)]^2},\label{euler_tm2}\\
    \frac{d}{dx}\frac{\partial L}{\partial I'} &= f_1(x)\int_{0}^{x}\frac{F_2(y)dy}{\left[1+yI(x)\right]^2}-\left[1-F_1(x)\right]\frac{d}{dx}\int_{0}^{x}\frac{F_2(y)dy}{[1+yI(x)]^2},\label{euler_tm3}
\end{align}
with
\begin{equation}\label{euler_tm4}
    \frac{d}{dx}\int_{0}^{x}\frac{F_2(y)dy}{[1+yI(x)]^2} = \frac{F_2(x)}{\left[1+xI(x)\right]^2}-2I'(x)\int_{0}^{x}\frac{yF_2(y)dy}{\left[1+yI(x)\right]^3}.
\end{equation}
Using (\ref{euler_tm1}), (\ref{euler_tm3}), and (\ref{euler_tm4}) in (\ref{euler2}), we have
\begin{equation}\label{condition}
    \int_{0}^{x}\frac{F_2(y)dy}{[1+yI(x)]^2} = \frac{\left[1-F_1(x)\right]F_2(x)}{\left[1+xI(x)\right]^2f_1(x)}.
\end{equation}
Hence, we proved Lemma \ref{lm:Iterm_cnt}.

\section{Proof of (\ref{rate_l1})}\label{proof_onelevel_term1}

According to Lemma \ref{single-level-lemma}, the secrecy rate is
\begin{align}
   R_s^{[1]}&=\Pr\left[R^{[1]} \leq \log(1+h_1P)\right]\E_{h_2}\left[R^{[1]}- \log\left(1+h_2P\right)\right]^{+}\notag\\
   &=\Pr\left\{h_1 \geq h_1^{*}\right\}\int_{0}^{h_1^{*}}\left[R^{[1]}- \log\left(1+h_2P\right)\right]f_2(h_2)dh_2 \notag\\
   &=\exp\left(-\frac{h_1^{*}}{\lambda_1}\right)\left[R^{[1]}F_2(h_1^{*})-\int_{0}^{h_1^{*}}\log(1+h_2P)f_2(h_2)dh_2\right], \label{single_int_pat1}
\end{align}
where $h_1^{*}=\left[\exp(R^{[1]})-1\right]/{P}$. By using integrating by part for the integral in (\ref{single_int_pat1}), we have
\begin{align*}
   R_s^{[1]}&=\exp\left(-\frac{h_1^{*}}{\lambda_1}\right)\int_{0}^{h_1^{*}}\frac{\left[1-\exp\left(-h_2/\lambda_2\right)\right]P}{1+h_2P}dh_2\notag\\
   &=\exp\left(-\frac{h_1^{*}}{\lambda_1}\right)\left[R^{[1]}-\int_{0}^{h_1^{*}}\frac{\exp\left(-h_2/\lambda_2\right)P}{1+h_2P}dh_2\right].\notag
\end{align*}
By letting $t=(1+h_2P)/(\lambda_2 P)$, we have
\begin{align*}
   R_s^{[1]}  &=\exp\left(-\frac{h_1^{*}}{\lambda_1}\right)\left[R^{[1]}-\exp\left(\frac{1}{\lambda_2 P}\right)\int_{\frac{1}{\lambda_2 P}}^{\frac{1+h_1^{*} P}{\lambda_2 P}}\frac{\exp(-t)}{t}dt\right]\notag\\
   &=\exp\left(-\frac{h_1^{*}}{\lambda_1}\right)\left\{R^{[1]}-\exp\left(\frac{1}{\lambda_2 P}\right)\left[E_i\left(\frac{1+h_1^* P}{\lambda_2 P}\right)-E_i\left(\frac{1}{\lambda_2 P}\right)\right]\right\}\notag
\end{align*}
By using $h_1^{*}=\left[\exp(R^{[1]})-1\right]/{P}$, we can obtain (\ref{rate_l1}).

\section{Proof of (\ref{int_term_ap})}\label{proof_int_term1}

We can write
\begin{align}\label{int-termapp}
\int_{0}^{x}\frac{F_2(y)dy}{\left[1+yI(x)\right]^2}= \int_{0}^{x}\frac{1-\exp\left(-y/\lambda_2\right)}{\left[1+yI(x)\right]^2}dy
=\underbrace{\int_{0}^{x}\frac{dy}{\left[1+yI(x)\right]^2}}_{T_3}-\underbrace{\int_{0}^{x}\frac{\exp\left(-y/\lambda_2\right)dy}{\left[1+yI(x)\right]^2}}_{T_4},
\end{align}
and evaluate $T_3$ and $T_4$ separately. First, we have
\begin{equation}\label{T3-1}
    T3=\frac{x}{1+xI(x)}.
\end{equation}
To evaluate $T_4$, we have
\begin{align}\label{T4-1}
T_4&=-\int_{0}^{x}\frac{\exp\left(-y/\lambda_2\right)}{I(x)}d\left[\frac{1}{1+yI(x)}\right]\notag\\
&=-\left.\frac{1}{I(x)}\frac{\exp\left(-y/\lambda_2\right)}{1+yI(x)}\right|_{0}^{x}-\frac{1}{\lambda_2I(x)}\int_{0}^{x}\frac{\exp\left(-y/\lambda_2\right)dy}{1+yI(x)}\notag\\
&=\frac{1}{I(x)}\left[1-\frac{\exp\left(-x/\lambda_2\right)}{1+xI(x)}\right]-\underbrace{\frac{1}{\lambda_2I^2(x)}\int_{0}^{x}\frac{\exp\left(-y/\lambda_2\right)}{1+yI(x)}d\left[1+yI(x)\right]}_{T_5}.
\end{align}
By letting $1+yI(x)=t$, we have
\begin{align}\label{T4-2}
T_5&=\frac{\exp\left(1/\lambda_2I(x)\right)}{\lambda_2I^2(x)}\int_{1}^{1+xI(x)}\frac{\exp\left(-t/\lambda_2I(x)\right)}{t/\lambda_2I(x)}d\left[\frac{t}{\lambda_2I(x)}\right]\notag\\
&=\frac{1}{\lambda_2I^2(x)}\exp\left(\frac{1}{\lambda_2I(x)}\right)E_i\left.\left(\frac{t}{\lambda_2I(x)}\right)\right|_{1+xI(x)}^{1}\notag\\
&=\frac{1}{\lambda_2I^2(x)}\exp\left(\frac{1}{\lambda_2I(x)}\right)\left[E_i\left(\frac{1}{\lambda_2I(x)}\right)-E_i\left(\frac{1+xI(x)}{\lambda_2I(x)}\right)\right].
\end{align}
Combining (\ref{T3-1}), (\ref{T4-1}), (\ref{T4-2}) and (\ref{int-termapp}), we can obtain (\ref{int_term_ap}).

\bibliographystyle{IEEEtran}

\begin{thebibliography}{10}

\bibitem{Wyner:BSTJ:75}
A.~D. Wyner, ``The wire-tap channel,'' \emph{Bell Syst. Tech. J.}, vol.~54,
  no.~8, pp. 1355--1387, Oct. 1975.

\bibitem{Csiszar:IT:78}
I.~Csisz{\'{a}}r and J.~K{\"{o}}rner, ``Broadcast channels with confidential
  messages,'' \emph{IEEE Trans. Inf. Theory}, vol.~24, no.~3, pp. 339--348, May
  1978.

\bibitem{Leung-Yan-Cheong:IT:78}
S.~K. Leung-Yan-Cheong and M.~Hellman, ``The {G}aussian wire-tap channel,''
  \emph{IEEE Trans. Inf. Theory}, vol.~24, no.~4, pp. 451--456, July 1978.

\bibitem{Liang:IT:06}
Y.~Liang and H.~V. Poor, ``Multiple access channels with confidential
  messages,'' \emph{IEEE Trans. Inf. Theory}, vol.~54, no.~3, pp. 976--1002,
  Mar. 2008.

\bibitem{Liu:ISIT:06}
R.~Liu, I.~Maric, R.~{D.~Yates}, and P.~Spasojevic, ``The discrete memoryless
  multiple access channel with confidential messages,'' in \emph{Proc. IEEE
  Int. Symp. Information Theory}, Seattle, WA, July 2006, pp. 957--961.

\bibitem{Lai:IT:06}
L.~Lai and H.~{El~Gamal}, ``The relay-eavesdropper channel: Cooperation for
  secrecy,'' \emph{IEEE Trans. Inf. Theory}, vol.~54, no.~9, pp. 4005--4019,
  Sep. 2008.

\bibitem{Tekin:IT:07}
E.~Tekin and A.~Yener, ``The general {G}aussian multiple-access and two-way
  wire-tap channels: Achievable rates and cooperative jamming,'' \emph{IEEE
  Trans. Inf. Theory}, vol.~54, no.~6, pp. 2735--2751, Jun. 2008.

\bibitem{Liu:IT:08}
R.~Liu, I.~Maric, P.~Spasojevi\'{c}, and R.~Yates, ``Discrete memoryless
  interference and broadcast channels with confidential messages: Secrecy
  capacity regions,'' \emph{IEEE Trans. Inf. Theory}, vol.~54, no.~6, pp.
  2493--2507, Jun. 2008.

\bibitem{Liang:IT:09}
Y.~Liang, A.~Somekh-Baruch, H.~V. Poor, S.~{Shamai (Shitz)}, and S.~Verd\'{u},
  ``Capacity of cognitive interference channels with and without secrecy,''
  \emph{IEEE Trans. Inf. Theory}, vol.~55, no.~2, pp. 604--619, Feb. 2009.

\bibitem{Liang:ITS:08}
Y.~Liang, H.~V. Poor, and S.~{Shamai (Shitz)}, ``Information theoretic
  security,'' \emph{Foundations and Trends in Communications and Information
  Theory}, vol.~5, no. 4-5, pp. 355--580, 2008.

\bibitem{bigl:IT:98}
{E.\ Biglieri, J.\ Proakis, and S.\ Shamai (Shitz)}, ``Fading channels:
  information-theoretic and communications aspects,'' \emph{IEEE Trans. Inf.
  Theory}, vol.~44, pp. 1895--1911, Oct. 1998.

\bibitem{Gopala:IT:08}
P.~Gopala, L.~Lai, and H.~{El Gamal}, ``On the secrecy capacity of fading
  channels,'' \emph{IEEE Trans. Inf. Theory}, vol.~54, no.~10, pp. 4687--4698,
  Oct. 2008.

\bibitem{Liang:IT:08J}
Y.~Liang, H.~V. Poor, and {S. Shamai~(Shitz)}, ``Secure communication over
  fading channels,'' \emph{IEEE Trans. Inf. Theory}, vol.~54, no.~6, pp.
  2470--2492, Jun. 2008.

\bibitem{Li:Allerton:06}
Z.~Li, R.~Yates, and W.~Trappe, ``Secrecy capacity of indepedent parallel
  channels,'' in \emph{Proc.\ Allerton Conference on Commun.\, Contr.\,
  Computing}, Monticello, IL, USA, Sept. 2006.

\bibitem{Tang:IT:09}
X.~Tang, R.~Liu, P.~Spasojevi\'c, and H.~V. Poor, ``On the throughput of secure
  hybrid-{ARQ} protocols for {G}aussian block-fading channels,'' \emph{IEEE
  Trans. on Inf. Theory}, vol.~55, no.~4, pp. 1575--1591, Apr. 2009.

\bibitem{Maurer:IT:93}
U.~Maurer, ``Secret key agreement by public discussion from common
  information,'' \emph{IEEE Trans. Inf. Theory}, vol.~39, no.~3, pp. 733--742,
  1993.

\bibitem{Ahlswede:IT:93PI}
R.~Ahlswede and I.~Csisz{\'{a}}r, ``Common randomness in information theory and
  cryptography part {I}: Secret sharing,'' \emph{IEEE Trans. Inf. Theory},
  vol.~39, no.~4, pp. 1121--1132, July 1993.

\bibitem{Csiszar:IT:00}
I.~Csisz\'{a}r and P.~Narayan, ``Common randomness and secret key generation
  with a helper,'' \emph{IEEE Trans. on Inf. Theory}, vol.~46, no.~2, pp.
  344--366, Feb. 2000.

\bibitem{Csiszar:IT:04}
------, ``Secrecy capacities for multiple terminals,'' \emph{IEEE Trans. on
  Inf. Theory}, vol.~50, no.~12, pp. 3047--3061, Dec. 2004.

\bibitem{Csiszar:IT:08}
------, ``Secrecy capacities for multiterminal channel models,'' \emph{IEEE
  Trans. on Inf. Theory}, vol.~54, no.~6, p. 2437–2452, Jun. 2008.

\bibitem{Maurer:IT:03}
U.~Maurer and S.~Wolf, ``Secret-key agreement over unauthenticated public
  channels - part {I}: Definitions and a completeness result,'' \emph{IEEE
  Trans. on Inf. Theory}, vol.~49, no.~4, pp. 822--831, Apr. 2003.

\bibitem{Maurer:IT:03B}
------, ``Secret-key agreement over unauthenticated public channels - part
  {II}: The simulatability condition,'' \emph{IEEE Trans. on Inf. Theory},
  vol.~49, no.~4, pp. 832--838, Apr. 2003.

\bibitem{Maurer:IT:03C}
------, ``Secret-key agreement over unauthenticated public channels - part
  {III}: Privacy amplification,'' \emph{IEEE Trans. on Inf. Theory}, vol.~49,
  no.~4, pp. 839--851, Apr. 2003.

\bibitem{Khisti:ISIT:08}
A.~Khisti, S.~N. Diggavi, and G.~W. Wornell, ``Secret key generation using
  correlated sources and noisy channels,'' in \emph{Proc. IEEE Int. Symp. Inf.
  Thoery}, Toronto, Canada, Jul. 2008, pp. 1005--1009.

\bibitem{Prabhakaran:ISIT:08}
V.~Prabhakaran, K.~Eswaran, and K.~Ramchandran, ``Secrecy via sources and
  channels: A secret key - secret message rate tradeoff region,'' in
  \emph{Proc. IEEE Int. Symp. Inf. Thoery}, Toronto, Canada, Jul. 2008, pp.
  1010--1014.

\bibitem{Tang:ISIT:08}
X.~Tang, R.~Liu, P.~Spasojevi\'c, and H.~V. Poor, ``The {G}aussian wiretap
  channel with a helping interferer,'' in \emph{Proc. IEEE Int. Symp.
  Information Theory}, Toronto, Canada, Jul. 2008 2008.

\bibitem{Tang:arXiv:09A}
------, ``Interference assisted secret communication,'' \emph{IEEE Trans. on
  Inf. Theory}, to appear.

\bibitem{Shamai:ISIT:97}
S.~{Shamai (Shitz)}, ``A broadcast strategy for the {G}aussian slowly fading
  channel,'' in \emph{Proc. IEEE Int. Symp. Inf. Thoery}, Ulm, Germany, Jun. 29
  - Jul. 4 1997.

\bibitem{Shamai:IT:03}
S.~{Shamai (Shitz)} and A.~Steiner, ``A broadcast approach for a single-user
  slowly fading {MIMO} channel,'' \emph{IEEE Trans. on Inf. Theory}, vol.~49,
  no.~10, pp. 2617--2635, Oct. 2003.

\bibitem{Ghany:ICC:09}
M.~{Abdel Latif}, A.~Sultan, and H.~{El~Gamal}, ``{ARQ} based secret key
  sharing,'' in \emph{Proc. IEEE Int. Conf. on Commun. (ICC)}, Dresden,
  Germany, June 14-18 2009.

\bibitem{Cover:IT:72}
T.~Cover, ``Broadcast channels,'' \emph{IEEE Trans. Inf. Theory}, vol.~18,
  no.~1, pp. 2--14, Jan. 1972.

\bibitem{Gelfand:63}
I.~Gelfand and S.~Fomin, \emph{Calculus of Variations}.\hskip 1em plus 0.5em
  minus 0.4em\relax Englewood Cliffs, NJ: Prentice-Hall, 1963.
  
\end{thebibliography}

\end{document}